\newcommand{\incfig}{\centering\includegraphics}
\newcommand{\eqr}[1]{Eq.\thinspace(#1)}
\newcommand{\pfrac}[2]{\frac{\partial #1}{\partial #2}}
\newcommand{\pfraca}[1]{\frac{\partial}{\partial #1}}
\newcommand{\mvec}[1]{\mathbf{#1}}
\newcommand{\gvec}[1]{\boldsymbol{#1}}
\newcommand{\gcs}{\nabla_{\mvec{x}}}
\newcommand{\gvs}{\nabla_{\mvec{v}}}
\newcommand{\gke}{\texttt{Gkeyll}}
\crefname{hypothesis}{Hypothesis}{Hypotheses}
\begin{document}

\title{Discontinuous Galerkin schemes for a class of Hamiltonian
  evolution equations with applications to plasma fluid and kinetic
  problems}%

\author{A. Hakim\thanks{Princeton Plasma Physics Laboratory, Princeton, NJ 08543-0451, USA} \and G.~W. Hammett\footnotemark[1] \and
  E.~L.~Shi\thanks{Lawrence Livermore National Laboratory, Livermore,
    CA 94550, USA} \and N.~R.~Mandell\thanks{Department of Astrophysical Sciences, Princeton University,
    Princeton, NJ 08544, USA}}%

\maketitle

\begin{center}
{\footnotesize \today} 
\end{center}

\begin{abstract}
  In this paper we present energy-conserving, mixed discontinuous
  Galerkin (DG) and continuous Galerkin (CG) schemes for the solution
  of a broad class of physical systems described by Hamiltonian
  evolution equations. These systems often arise in fluid mechanics
  (incompressible Euler equations) and plasma physics (Vlasov--Poisson
  equations and gyrokinetic equations), for example. The dynamics is
  described by a distribution function that evolves given a
  Hamiltonian and a corresponding Poisson bracket operator, with the
  Hamiltonian itself computed from field equations. Hamiltonian
  systems have several conserved quantities, including the quadratic
  invariants of total energy and the $L_2$ norm of the distribution
  function. For accurate simulations one must ensure that these
  quadratic invariants are conserved by the discrete scheme. We show
  that using a discontinuous Galerkin scheme to evolve the
  distribution function and ensuring that the Hamiltonian lies in its
  continuous subspace leads to an energy-conserving scheme in the continuous-time limit. Further,
  the $L_2$ norm is conserved if central fluxes are used to update the
  distribution function, but decays monotonically when using upwind
  fluxes. The conservation of density and $L_2$ norm is then used to
  show that the entropy is a non-decreasing function of time. The
  proofs shown here apply to any Hamiltonian system, including ones in
  which the Poisson bracket operator is non-canonical (for example,
  the gyrokinetic equations). We demonstrate the ability of the scheme
  to solve the Vlasov--Poisson and incompressible Euler equations in 2D
  and provide references where we have applied these schemes to solve
  the much more complex 5D electrostatic and electromagnetic
  gyrokinetic equations.
\end{abstract}

\begin{keyword}
35Q83, 35Q20, 65M60, 82D10
\end{keyword}

\section{Introduction}
\label{sec:intro}

In this paper we present energy-conserving, mixed discontinuous
Galerkin (DG)/continuous Galerkin (CG) schemes for the solution of a
broad class of problems in fluid mechanics and plasma physics,
described by a Hamiltonian evolution equation
\begin{align}
  \pfrac{f}{t} + \{f,H\} = 0 \label{eq:fevolve}.
\end{align}
Here $f(t,\mvec{z})$ is a distribution function, $H(\mvec{z})$ is the
Hamiltonian and $\{g,h\}$ is the Poisson bracket operator. 
The
coordinates $\mvec{z}=(z^1,\ldots,z^N)$ label the $N$-dimensional
phase space in which the distribution function evolves. The
Hamiltonian itself is determined from the solution of a field
equation, usually an elliptic (\emph{e.g.} Poisson) or hyperbolic (\emph{e.g.} Maxwell)
partial differential equation in configuration space.  For an overview
of Hamiltonian dynamics, see, for example, the textbook of Sudarshan
and Mukunda \cite{sudarshan74} or section II of Cary and
Brizard \cite{Cary:2009dl}.

Defining the phase-space velocity vector $\gvec{\alpha} =
(\dot{z}^1,\ldots,\dot{z}^N)$, where the characteristic speeds are
determined from $\dot{z}^i = \{z^i,H\}$, allows rewriting
\eqr{\ref{eq:fevolve}} in an explicit conservation law form
\begin{align}
  \pfraca{t}(\mathcal{J}f) + \pfraca{z^i}
  \left(
    \mathcal{J}\dot{z}^i f
  \right) = 0,
  \label{eq:flocal_cons}
\end{align}
where $\mathcal{J}$ is the Jacobian of the phase-space transformation from canonical to
(potentially) non-canonical coordinates, 
and the Liouville theorem on phase-space
incompressibility, $\nabla\cdot(\mathcal{J}\gvec{\alpha}) = 0$, where $\nabla$ is the gradient operator in phase space, has been applied.
Note that if the coordinates are canonical, $\mathcal{J}=1$.

A well-known example of such a system is the equation of vorticity
dynamics in two-dimensional incompressible fluid flow. For this system
the ``phase space'' is $(x,y)$, the ``distribution function''
$f(t,x,y)$ is the vorticity, and the Hamiltonian is simply
$H(x,y)=\phi(x,y)$, where $\phi$ is the potential (or stream function)
determined from $\gcs^2 \phi = -f$. The Poisson bracket operator is
canonical, i.e. $\{f,g\} = f_x g_y - f_y g_x$. The characteristic
velocity is $\dot{\mvec{x}}=\{\mvec{x},H\}=\nabla_\perp\phi\times
\hat{\mvec{z}}$, where $\nabla_\perp=\hat{\mvec{x}}\partial_x +
\hat{\mvec{y}}\partial_y$.

% I tried to reword the following a bit to be a bit more ambiguous about the Morrison \emph{et al.} results.  I think it is the Hamiltonian for the combined field-particle system that is complicated, it is not the time-dependent B per se that makes things complicated. I.e., if one is doing the Hamiltonian formulation only for the Vlasov equation, with normal Maxwell's equations for the fields, I think it is relatively simple, even for time-dependent B.

Another example is the Vlasov equation describing the flow of particles in
an electromagnetic field.
For simplicity we focus here on the case of a specified time-independent magnetic field $\mvec{B}(\mvec{x})$ and an electric field from a general electrostatic potential, $\mvec{E}(\mvec{x},t) = - \gcs \phi(\mvec{x},t)$.\footnote{The Hamiltonian structure for the general Vlasov--Maxwell particle-field system is more complicated, see \cite{Marsden:1982,Morrison:1980,Morrison:1984,Morrison:2013}. We have implemented an energy-conserving DG algorithm for the Vlasov equation and the full time-dependent Maxwell's equations using a different technique than described in the present paper, see \cite{Juno:2018}.}
In this case, in the
non-canonical coordinates $z=(\mvec{x},\mvec{v})$, where $\mvec{x}$
are coordinates in the configuration space, and $\mvec{v}$ particle
velocity, the Hamiltonian is given by
\begin{align}
  H(\mvec{x},\mvec{v},t) = \frac{1}{2} m|\mvec{v}|^2 + q \phi(\mvec{x},t) \label{eq:vm-hamil}
\end{align}
where $m$ is the particle mass, $q$ the particle charge, and
$\phi(\mvec{x},t)$ is the electric potential. The corresponding
Poisson bracket is
\begin{align}
  \{f,g\} = 
  \frac{1}{m}
 \big(
    \gcs f \cdot \gvs g - \gcs g \cdot \gvs f
  \big)
  +
  \frac{q\mvec{B}}{m^2} \cdot 
  \gvs f \times \gvs g \label{eq:vm-bracket}
\end{align}
where $\mvec{B}(\mvec{x})$ is the \emph{specified} magnetic field and $\gcs$ and
$\gvs$ are the gradient operators in configuration space and velocity
space, respectively. The Jacobian is a constant,
$\mathcal{J}=m^3$. 
Note that in the case in which there is a non-zero magnetic field, the Poisson
bracket is non-canonical, that is, it can not be written using the commutation operator. This is indicated by the presence of the additional term involving the cross product in \eqr{\ref{eq:vm-bracket}}.
In the $\mvec{B}=0$ limit, the Poisson bracket is canonical, except
for the $1/m$ term. The characteristic velocities are
$\dot{\mvec{x}}=\{\mvec{x},H\}=\mvec{v}$ and
$\dot{\mvec{v}}=\{\mvec{v},H\}=q(-\gcs\phi+\mvec{v}\times\mvec{B})/m$,
as expected.
% GWH: I moved most of the following to earlier in this paragraph.  It isn't enough to restrict to a specified B(x,t), it must also be a time-independent B(x).  Otherwise there should be a dA/dt term in the electric field. 
% (Remark that this Hamiltonian is only valid when the magnetic field is \emph{specified} and not computed self-consistently from Maxwell equations. In the self-consistent case one needs to include the time-derivative of a vector potential while computing the characteristic velocities (acceleration). The Hamiltonian structure in this case is much more complicated. See Morrison {\bf cite}).

Unlike Navier--Stokes equations, where there are separate equations for
the conservation of mass, momentum, and energy,
so that finite volume or other 
numerical methods that solve the equations in conservation-law form can automatically ensure those
properties, the conservation properties for Hamiltonian problems in
phase space are indirect. 
Part of the energy conservation requires
that the discrete scheme satisfy one of the quadratic invariants of
the Poisson bracket, i.e. $\int H \{f,H\}\thinspace d\mvec{z} = 0$, with
the integration taken over the complete phase space. Additionally, the
total energy conservation (particles plus field) usually requires use
of the corresponding field equation, and hence also imposes additional
constraints on its solution.

% gwh: softened a bit:
% Historically, the most widely used numerical methods for simulating
Historically, many of the widely-used numerical methods for simulating
Hamiltonian systems, especially the Vlasov equation, are based on a
particle-in-cell (PIC)
approach \cite{birdsallbook, HockneyEastwood1989}. PIC comprises a class of Lagrangian schemes that introduce macro-particles
that move in phase space with the characteristic velocity $\gvec{\alpha}$. 
(These have similarities to point-vortex and more general vortex methods in hydrodynamics.)
PIC and associated DSMC (Direct Simulation Monte Carlo) methods can be used to simulate very complex
geometries and can incorporate a wide variety of physical processes,
including collisions, chemical reactions, atomic physics, and surface
effects such as sputtering and field-induced emission.
An advantage of such methods is that they have the
ability to provide, in some cases, reasonable results using just a few
(sometimes on the order tens to thousands) macro-particles per
configuration-space cell. 
%(How many particles is needed depends on the problem and the level of signal-to-noise that that is required.)
However, they can have difficulties with
noise for some problems, resulting in the need to use more particles per cell to improve the signal-to-noise ratio. Birdsall and Landgon \cite{birdsallbook}
Chapter 12, and Nevins \emph{et al.} \cite{Nevins:2005fk}, for example, give a
discussion of noise in PIC codes. Note that PIC methods still
require gridding of configuration space (but not of the full
phase space) to account for 
% collisions and for 
the solution of field equations.
% GWH: I think the configuration grid for finding near-by particles to collide with is not a big effect on the accuracy of the solution.  Here we are focused on Hamiltonian problems anyway.

% ELS: I recommend using "Eulerian" instead of "continuum" in the rest of the paper if only one of the adjectives is to be used, since continuum is a niche term that means other things outside of plasma simulations
%
% GWH: I'll think about this.  I know in many areas it means fluid or elastic media.  Some people object to "Eulerian" being used for spectral methods or semi-Lagrangian methods.
Another approach to solving Hamiltonian equations is to directly
discretize the equations using a continuum Eulerian scheme. Unlike
PIC/DSMC algorithms, until recently there had been less work on such
approaches in many areas.
%(There are some areas where continuum methods have been in more use, described below.)
% ELS: The following sentence gives the wrong impression that "complexity" and "high dimensionality" are separate causes for "less work".
%One reason is the complexity of developing efficient
%continuum schemes, and the high dimensionality of the phase space: in
%addition to configuration space, the velocity space also needs to be
%discretized. 
%
% GWH: actually, I think even in 1D, efficient Eulerian algorithms are more complicated than PIC algorithms, so I think complexity and high-dimensionality are separate reasons.  I'll think about the wording here.
%
One reason for this is the increased complexity of developing efficient
continuum schemes due to the high dimensionality of the phase space: in
addition to configuration space, the velocity space also needs to be
discretized. This makes efficient high-order schemes attractive, and the discontinuous Galerkin (DG) family of schemes \cite{Cockburn:2001vr, Cockburn:1998vt} are one such method that has been of growing interest for some computational fluid-dynamics applications \cite{Vincent:2011ez}. However, for Hamiltonian systems the discretization scheme must also preserve conservation laws, and while PIC and some continuum algorithms can conserve energy and momentum exactly, special care must be taken in a DG scheme. As we will describe in this paper, the flexibility of basis and test functions in DG allows for the scheme to be designed such that conserved quantities of the continuous  Hamiltonian system, such as
total (fluid plus field) energy or momentum, can be conserved exactly (or at least to machine precision).

%In recent years efficient high-order schemes, especially of the discontinuous Galerkin family \cite{Cockburn:2001vr, Cockburn:1998vt}, have been of growing interest for some
%computational-fluid-dynamics applications \cite{Vincent:2011ez}.
%% gwh: I moved the following sentence from the next paragraph to here, because PIC algorithms can also conserve energy (or momentum) exactly.
%Importantly, as we will describe in this paper, 
%well-designed continuum schemes based on the DG approach, due to
%flexibility of basis and test functions, can be designed in such a
%manner that conserved quantities of the continuous system, such as
%total (fluid plus field) energy or momentum, can be exactly or
%accurately conserved. 
%(Of course on a computer there will be some small round off errors.)
%(PIC and some other continuum algorithms can also conserve energy or momentum exactly.)
%This is an important property for Hamiltonian problems.

Our algorithms are an extension of the discontinuous Galerkin scheme
presented by Liu and Shu \cite{Liu:2000ee} for the incompressible 2D
Navier--Stokes equations. In the Liu and Shu algorithm a discontinuous
basis set is used to discretize the vorticity equation, while a
continuous basis set is used to discretize the Poisson equation. It
was shown that discrete energy is conserved exactly by the
\emph{spatial scheme} for 2D incompressible flow if basis functions
for potential are a continuous subset of the basis functions for the
vorticity, irrespective of the numerical fluxes selected for the
vorticity equation. Further, they show that the spatial scheme also
conserves the enstrophy (the $L_2$ norm of the vorticity) exactly if a
central flux is used, decaying it for upwind fluxes.  Thus the Liu--Shu
algorithm is a kind of extension of the famous Arakawa
% I think Puckett1997 s.b. Lilly:1997:
% finite-difference method\cite{Arakawa1966119,Puckett1997} to
finite difference method \cite{Arakawa1966119,Lilly:1997} to
higher-order DG, and furthermore preserves energy conservation even if
limiters are used on the fluxes.

To our knowledge the connection between the Liu and Shu scheme and
general Hamiltonian systems
%, displayed above, 
has not yet been made. 
% gwh: the original wording made it sound like the we had demonstrated above the connection between the Liu and Shu scheme an general Hamiltonian systems.  But that won't be shown till later.  (I know what you meant, that examples of Hamiltonian systems had been shown earlier.)
In Bernsen and Bokhove \cite{Bernsen2006719} the structure of the
algorithm was further elaborated and extended for 2D geostrophic flow
problems and more sophisticated boundary conditions. 
Einkemmer and Wiesenberger \cite{Einkemmer:2014} present a DG algorithm for 2D incompressible fluids with a Local DG (LDG) algorithm for determining the potential (unlike the Liu and Shu DG algorithm which uses a continuous finite element method for the potential).
However, neither these
papers nor the original Liu and Shu paper points out that these methods
can work for general Hamiltonian problems that can be expressed in terms
of a Poisson bracket.  In fact, a later paper by Ayuso, Carrillo, and
Shu \cite{ayuso2011} does pioneering applications of DG to the Vlasov--Poisson equations but uses a different version of DG than in \cite{Liu:2000ee}, and state that they believe it is the first algorithm to show energy conservation for the Vlasov--Poisson equations.
One advantage of the method we will
use here is that it requires only a single $d$-dimensional Poisson
equation solve each step, while the Ayuso \emph{et al.}\ algorithm requires
$2d$ solutions of $d$-dimensional Poisson equations \cite{Ayuso-de-Dios:2012}.  
Another advantage is that the Ayuso \emph{et al.} algorithm was proved to work only
for DG basis functions in polynomial spaces of degree $p \ge 2$, while
the algorithm we use here works even for $p=1$.

% A recent paper\cite{Cheng2014630} presents another DG approach to conserving energy, but it is only for the Vlasov--Ampere equations, which do not generalize easily beyond 1D.
% GWH: we now cite their later Vlasov-Maxwell paper that does conserve energy, so we will skip the Vlasov-Ampere paper.

% GWH: I cited a Rossmanith and Seal paper above, is there another one with Seal as first author I should cite?
%
%  Also, we should mention papers by Seal and
%  Rossmanith and the general issues with semi-Lagrangian schemes and
%  why we think Eulerian schemes are better.
%

One area where continuum algorithms (apart from DG) have  been used extensively has been in the development of codes for kinetic plasma systems. 
Various continuum methods have been used in Vlasov/hybrid codes \cite{Cheng:1976, Valentini:2007, Rossmanith:2011, Coulette:2014, vonAlfthan:2014, Kormann:2019} (see also the review by Palmroth \cite{Palmroth:2018}.) Continuum codes have also been used to solve the gyrokinetic equations (a low-frequency asymptotic expansion of the Vlasov-Maxwell equations) for studying plasma turbulence in tokamak fusion devices and certain astrophysical applications \cite{Dorland:2000etg, Jenko:2001, Candy:2003, Nunami:2012, Idomura:2008hx, Peeters:2009, NUMATA:2010, Grandgirard:2016, Candy:2016, Dorr:2018cogent, Mandell:2018}.
These codes have become quite sophisticated and use a combination of various spectral, finite-difference, finite-volume, and semi-Lagrangian methods, but not the DG algorithms we explore here.
% GWH: might also add some more references by Besse, Sonnendrucker, or others, but this is good enough for now.
%
While the tokamak codes have been quite successful in comparisons with experiments in the core region of tokamaks, most of these codes assumed small-amplitude fluctuations and were optimized for that core region, and so have difficulties when applied to the edge region of fusion devices, where certain underlying approximations are no longer valid.
In this paper we demonstrate a DG algorithm that we believe has certain advantages when applied to the edge region of fusion devices (and other problems), including the ability to conserve energy well for Hamiltonian systems even when using upwind fluxes that aid robustness.

There is now a growing body of interesting work developing various versions of DG and applying them to various kinetic problems.  
Some versions of DG can conserve momentum exactly (but not energy) when applied to Vlasov--Poisson equations \cite{Heath:2012, Cheng:2013bs}.  
We have not yet found a DG algorithm that can conserve energy and momentum simultaneously.  
It would be interesting to compare momentum and energy conserving schemes in a future work.  
Energy-conserving versions of DG have been developed and applied to Vlasov--Maxwell equations \cite{Cheng:2014jcp, Cheng:2014b, Juno:2018}, but the issues are somewhat different because Maxwell's equations are hyperbolic while the Vlasov--Poisson, electrostatic gyrokinetic, and other Hamiltonian systems we consider here have fields determined by elliptic Poisson-type equations.
Various types of sparse-grid versions of DG are being applied to Vlasov--Maxwell systems \cite{Kormann:2015-tensor-train, Guo:2016, Juno:2018, Tao:2019}.
DG is being applied to non-plasma kinetic transport problems, including Boltzmann--Poisson equations for modelling nanoscale semiconductors \cite{Cheng:2009hm, Cheng:2011, Morales-Escalante:2017}.

%Continuum algorithms, like the one described in this paper, and PIC algorithms have different relative strengths and weaknesses for different types of problems.
% The most obvious difference is the elimination of particle noise that can be challenging, in
% certain parameter regimes, for standard PIC algorithms. 
%Continuum algorithms eliminate particle noise but still have other types of discretization errors.
%Which method performs best is highly dependent on the problem at hand.
%But in any case, the two approaches are very different, and it can be extremely valuable to have independent codes to cross-check against each other, especially for difficult nonlinear, chaotic systems such as plasma turbulence.  

The rest of this paper is organized as follows. We first describe the
mixed DG/CG algorithm in the context of Vlasov--Poisson equation,
proving energy conservation once particular basis functions are
selected for the field solver. We also show that for the
Vlasov--Poisson equation, the momentum is not conserved, and the
non-conservation is traced to the continuity properties of the
electric field. Although our energy-conserving algorithm does not
conserve momentum exactly, as we show below, the errors in momentum
converge rapidly as the spatial grid is refined (and is independent of
the velocity grid, so a coarse grid in velocity can still be used).
Several example applications
of the scheme to Vlasov--Poisson equations and incompressible Euler
equations are then shown.

\section{The Basic Algorithm}
\label{sec:algo}

In this section we describe the basic algorithm for the solution of
the Hamiltonian evolution equation, \eqr{\ref{eq:fevolve}}, written in
the conservation-law form \eqr{\ref{eq:flocal_cons}}. For simplicity,
we assume that $\mathcal{J}$ is a constant, valid for both the
incompressible Euler equation ($\mathcal{J}=1)$ as well as the
Vlasov--Maxwell/Poisson equations ($\mathcal{J}=m^3)$. Hence, the
equation we wish to solve can be written as a non-linear advection
equation in phase space,
\begin{align}
  \pfrac{f}{t} +
  \nabla\cdot(\gvec{\alpha} f) 
  = 0,
\end{align}
where $\nabla$ is the phase-space gradient operator. The Liouville theorem
on incompressibility gives $\nabla\cdot\gvec{\alpha} = 0$. Also, for
any smooth function $g$, $\{g,H\} = \nabla g\cdot\gvec{\alpha}$ and
$\{g,g\}=0$, hence implying that $\{H,H\}=\nabla H \cdot \gvec{\alpha}
= 0$.

To discretize this equation, we introduce a phase-space mesh
$\mathcal{T}$ with cells $K_j \in \mathcal{T}$, $j=1,\ldots,N$ and
introduce the following piecewise-polynomial approximation space for
the distribution function $f(t,\mvec{z})$
\begin{align}
  \mathcal{V}_h^p = \{ v : v|_K \in \mvec{P}^p, \forall K \in
  \mathcal{T} \},
\end{align}
where $\mvec{P}^p$ is some space of polynomials. 
% GWH: I would like to explain some of this in words.  How about:
[That is, $v(z)$ are polynomial functions of $\mvec{z}$ in each cell, and $\mvec{P}^p$ is the space of the linear combination of some set of multi-variate polynomials (the choice of this set is left arbitrary at this point).]
To approximate the
Hamiltonian, on the other hand, we introduce the space
\begin{align}
  \mathcal{W}^p_{0,h} = \mathcal{V}_h^p \cap C_0(\mvec{Z}),
\end{align}
where $\mvec{Z}$ is the phase-space domain and $C_0(\mvec{Z})$ is the set of continuous functions. Essentially, we allow the
distribution function to be discontinuous, while requiring that the
Hamiltonian is in the continuous subset of the space used for the
distribution function. 
%Note that as the characteristic velocities $\gvec{\alpha}$
% ELS: Changed "depend on the" to "involve only". It's not enough
% that the characteristic velocities contain gradients of H; 
% they cannot have H appearing by itself, either for the following
% statement about polynomial order to hold.
%involve only \emph{gradients} of the Hamiltonian, the use of the space
%$\mathcal{W}^p_{0,h}$ indicates that these will be computed to one
%lower polynomial order in the direction of the gradient than the
%Hamiltonian itself.
% GWH ??: is the above sentence necessary?  The grad operator reduces
% the polynomial order only in the direction of the gradient, but not 
% in other directions.

% GWH: I felt that the following needed a little more explanation of DG in a way
% that more physicists could follow it.

The standard formulation of DG then states the problem as finding the time evolution of the discrete approximation $f_h\in \mathcal{V}_h^p$ such that, for all cells $K_j\in \mathcal{T}$ , the error is zero in the weak sense (when projected onto the solution space $\mathcal{V}_h^p$):
\begin{align}
  \int_{K_j} w \left( \pfrac{f_h}{t}  +  \nabla\cdot(\gvec{\alpha}_h f_h) \right)
  \thinspace d\mvec{z} = 0
 \end{align}
for all test functions  $w\in \mathcal{V}_h^p$.  The subscript $h$ indicates the discrete solution.
% The problem can now be stated as finding $f_h\in \mathcal{V}_h^p$ such
%that for all $K_j\in \mathcal{T}$ and all test functions $w\in \mathcal{V}_h^p$,
Integrating by parts within a cell to move derivatives off of $f_h$ (which can be discontinuous across boundaries) onto the test functions $w$ leads to:
\begin{align}
  \int_{K_j} w\pfrac{f_h}{t} \thinspace d\mvec{z} + 
  \oint_{\partial K_j}w^-
  \mvec{n}\cdot\gvec{\alpha}_h\hat{F} \thinspace dS 
  - \int_{K_j}
  \nabla w \cdot \gvec{\alpha}_h
  f_h \thinspace d\mvec{z} = 0. \label{eq:dis-weak-form}
\end{align}
Here, $\mvec{n}$ is an
outward unit vector on the surface of the cell $K_j$, and
the notation $w^-$ ($w^+$) indicates that the
function is evaluated just inside (outside) of the
surface $\partial K_j$.
In this \emph{discrete weak-form}, $f_h^-$ on the surface has been replaced with 
$\hat{F} = \hat{F}(f_h^-, f_h^+)$, a
numerical flux function. The use of a numerical flux function that is common to both sides of a cell face means that the flux of particles out of one cell through a particular face is is identical to the flux into the adjacent cell through that face, and thus ensures particle conservation holds.  (DG borrows this numerical flux function approach from finite volume methods in computational fluid dynamics.)

Note that in writing \eqr{\ref{eq:dis-weak-form}} we have used the
fact that $K_j\in\mathcal{T}$, $\mvec{n}\cdot\gvec{\alpha}^+_h =
\mvec{n}\cdot\gvec{\alpha}^-_h$ on the surface of a cell,
which follows from the restriction that
the Hamiltonian lies in a continuous subset of the basis
for the distribution function.
That is, the following Lemma holds:
\begin{lemma}\label{lem:norm-alpha}
  The component of the characteristic velocity normal to a face of a
  phase-space cell is continuous.
\end{lemma}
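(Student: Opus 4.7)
The plan is to exploit two structural facts: the antisymmetry of the Poisson bracket in the definition $\alpha^i = \{z^i,H\}$, and the continuity of $H$ across cell faces that follows from $H_h \in \mathcal{W}^p_{0,h}$. The core observation is that on a face of a rectangular phase-space cell the outward normal $\mvec{n}$ points along a single coordinate direction $z^k$, so $\mvec{n}\cdot\gvec{\alpha} = \pm \alpha^k = \pm\{z^k,H\}$. Writing the bracket in the general form $\{f,g\} = \Pi^{ij}(\mvec{z})\,\partial_i f\,\partial_j g$ with $\Pi^{ij}=-\Pi^{ji}$, one finds $\alpha^k = \Pi^{kj}\partial_j H$, and since $\Pi^{kk}=0$ by antisymmetry, the sum involves only $\partial_j H$ for $j\neq k$, i.e.\ derivatives of $H$ in directions \emph{tangential} to the face.

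First I would state the bracket in this $\Pi^{ij}$ form and note the antisymmetry. Then I would specialize to a face $\{z^k = \text{const}\}$ separating two neighboring cells, so that the outward normal is $\pm \mvec{e}_k$ and $\mvec{n}\cdot\gvec{\alpha} = \pm\Pi^{kj}\partial_j H$ with the sum restricted to $j\neq k$. Next I would argue that since $H$ lies in $C_0(\mvec{Z})$ it is continuous across the face, and restricting $H$ to the face gives a well-defined continuous (piecewise polynomial) function whose tangential gradient $\partial_j H$ ($j\neq k$) is therefore unambiguous; in particular $\partial_j H^- = \partial_j H^+$ for $j\neq k$ on $\partial K_j$. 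Finally I would note that the structure functions $\Pi^{kj}(\mvec{z})$ are given smooth (often constant) functions of phase space, hence continuous, so $\mvec{n}\cdot\gvec{\alpha}^- = \mvec{n}\cdot\gvec{\alpha}^+$.

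The main obstacle, and the one worth flagging explicitly, is to make clear that the normal derivative $\partial_k H$ --- which is the only derivative of $H$ that may legitimately jump across a $z^k=\text{const}$ face, since $H$ is only $C^0$ and not $C^1$ --- is precisely the derivative eliminated by the antisymmetry of $\Pi^{ij}$. This is the one place where the specific geometric choice ``normal direction singled out by the face'' interacts nontrivially with the algebraic property $\Pi^{kk}=0$. I would end by checking the two running examples from the introduction to make the statement concrete: for the 2D incompressible Euler case, on a face $x=\text{const}$ the normal velocity is $\partial_y\phi$, a tangential derivative of a continuous $\phi$; for Vlasov, on a face $x^i=\text{const}$ the normal velocity is $v^i$ (a coordinate, trivially continuous), and on a face $v^i=\text{const}$ it is $q(-\partial_{x^i}\phi + (\mvec{v}\times\mvec{B})_i)/m$, which involves only configuration-space derivatives of $\phi$ (tangential to the face) and the specified smooth $\mvec{B}$.
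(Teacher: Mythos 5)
Your proposal is correct and uses essentially the same argument as the paper: the antisymmetry of the Poisson tensor forces $n_i\Pi^{ij}n_j=0$, so the normal velocity $n_i\Pi^{ij}\partial_j H$ involves only the tangential gradient of the continuous Hamiltonian, which is single-valued on the face. The only difference is presentational --- the paper works with an arbitrary unit normal and shows the vector $\tau^j=n_i\Pi^{ij}$ is tangent to the face, whereas you specialize to coordinate-aligned faces where antisymmetry directly eliminates $\partial_k H$; your explicit caveat about the continuity of $\Pi^{kj}$ itself is also made by the paper, in the remark immediately following the lemma.
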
 
\begin{proof}
  Observe that for a general Hamiltonian
  system, the Poisson bracket operator is defined as [see
  Eq.\thinspace(2.31) in Cary and Brizard\cite{Cary:2009dl}]
  \begin{align}
    \{f,g\} = \pfrac{f}{z^i}\Pi^{ij}\pfrac{g}{z^j},
  \end{align}
  where $\Pi^{ij}$ is the anti-symmetric \emph{Poisson tensor}. The
  characteristic velocity, $\dot{z}^i=\{z^i,H\}$, can then be written
  as $\dot{z}^i=\Pi^{ij}\partial H/\partial z^j$. Let $n_i$ be a unit
  vector normal to a cell surface. We have
  \begin{align}
    n_i \dot{z}^i = n_i \Pi^{ij}\pfrac{H}{z^j} = \tau^j \pfrac{H}{z^j} =
    \gvec{\tau}\cdot\nabla H,
  \end{align}
  where $\tau^j \equiv n_i \Pi^{ij}$. Hence, $\gvec{\tau}\cdot\mvec{n}
  = n_i\Pi^{ij} n_j = 0$, as $\Pi^{ij}$ is anti-symmetric, showing
  that the vector $\gvec{\tau}$ is orthogonal to $\mvec{n}$, and thus
  tangent to the cell surface. Hence, as the Hamiltonian is
  continuous, the tangential component of its gradient (the normal
  component of the characteristic velocity) is also continuous.
\end{proof}
\begin{remark}
In some non-canonical cases (such as electromagnetic gyrokinetics in non-orthogonal field-aligned coordinates), the Poisson tensor itself is not continuous across cell interfaces. In these cases, a numerical flux function should be used for the entire quantity $\mvec{n}\cdot\gvec{\alpha}_h F$ in \eqr{\ref{eq:dis-weak-form}}. This allows the scheme to conserve energy even when Lemma \ref{lem:norm-alpha} does not hold. However for simplicity, we will assume in the following that Lemma \ref{lem:norm-alpha} does indeed hold. 
\end{remark}

The polynomial space $\mvec{P}^p$ in each cell can be
spanned by either the Lagrange tensor basis functions or the
\emph{serendipity} basis functions\cite{Arnold:2011eu}, or some other suitable functions. The Serendipity basis set
have the advantage of using fewer basis functions while giving the same formal
convergence order (though being less accurate) as the Lagrange tensor
basis. Other choices are also possible, for example letting
$\mvec{P}^p$ be the set of all polynomials of order at-most $p$. This
leads to even fewer basis functions, which may be advantageous in higher
phase-space dimensions.

For the numerical flux we use
\begin{align}
  \mvec{n}\cdot\gvec{\alpha}
  \hat{F}(f_h^-,f_h^+)
  =
  \frac{1}{2}\mvec{n}\cdot\gvec{\alpha}
  (f_h^+ + f_h^-)
  -
  \frac{c}{2}(f_h^+-f_h^-),
  \label{eq:num-flux}
\end{align}
where we use either a \emph{central flux} $c = 0$ or an
\emph{upwind flux} $c =|\mvec{n}\cdot\gvec{\alpha}_h|$. The surface
and volume integrals in \eqr{\ref{eq:dis-weak-form}} can be replaced
by Gaussian quadrature of an appropriate order to ensure that the
discrete integrals are performed \emph{exactly}. Under-integration can
lead to subtle problems with stability and energy
conservation, as we discussed in \cite{Juno:2018}.
Finally, we remark that a strong-stability-preserving
(SSP), third-order Runge--Kutta scheme is used here to advance the solution
in time, but other time integration algorithms can be used as well.

The discretization of the Hamiltonian depends on the physical system
under consideration. For the Vlasov--Poisson equations the discrete
form of each species Hamiltonian, \eqr{\ref{eq:vm-hamil}}, can be
written as
\begin{align}
  H_h(\mvec{x},\mvec{v},t) = \frac{1}{2} m v_h^2 + q
  \phi_h(\mvec{x},t),
  \label{eq:vp-hamil}
\end{align}
where ${v}^2_h$ is the projection of $v^2$ on the space
$\mathcal{W}^p_{0,h}$ (that is, the kinetic-energy term must be
projected onto the continuous space). 
% Note that if $v^2\in
% \mathcal{W}^p_{0,h}$, then $v_h^2=v^2$. 
For example, for piecewise-linear basis functions in 1D, $v^2_h$ is a continuous piecewise linear approximation to $v^2$ (this allows energy conservation even if $v^2$ is not in the basis set), while if piecewise parabolic basis functions are used, then $v^2_h = v^2$.  
The electrostatic potential is
determined from the Poisson equation
\begin{align}
  \gcs^2 \phi(\mvec{x},t) 
  = -\frac{1}{\epsilon_0}
  \sum_s q \int_{-\infty}^{\infty} f(t,\mvec{x},\mvec{v})
  \thinspace d\mvec{v}
  = -\frac{\varrho_c}{\epsilon_0}, \label{eq:vp-poisson}
\end{align}
where $\epsilon_0$ is the permittivity of free space, $\varrho_c$ is
the total charge density, and the sum extends over all species in the
plasma. (The species subscript on the charge, mass, and distribution
function are dropped.) To discretize the Poisson equation we use the
solution space
\begin{align}
  \mathcal{X}^p_{0,h} = \mathcal{W}^p_{0,h} \backslash \Omega,
\end{align}
i.e, the restriction of the continuous set $\mathcal{W}^p_{0,h}$ on
the configuration space $\Omega$.  
The standard continuous Galerkin / finite element method for Poisson-type elliptic equations is to find $\phi_h\in \mathcal{X}^p_{0,h}$ such that
\begin{align}
  \oint_{\partial\Omega} \psi\gcs \phi_h \cdot \mvec{n} dS
  -\int_\Omega \nabla\psi \cdot \gcs \phi_h\thinspace d\mvec{x}
  =
  -\frac{1}{\epsilon_0}\int_\Omega \psi\varrho_{ch} \thinspace d\mvec{x}
  \label{eq:dis-poisson}
\end{align}
for all test functions $\psi\in \mathcal{X}^p_{0,h}$. Here, the integration is
performed over the complete configuration space, and $\mvec{n}$ is a
unit outward normal to the configuration-space boundary.

\section{Conservation and stability properties of the scheme}

Hamiltonian-evolution equations satisfy several conservation
laws, which follow from the properties of the Poisson bracket
operator. In particular, the identities
\begin{align}
  \int f\{f,H\}\thinspace d\mvec{z} = \int H\{f,H\}\thinspace d\mvec{z} = 0,
\end{align}
where the integration is taken over the entire phase space, lead to the
conservation of the \emph{quadratic invariants}: the $L_2$ norm of the
distribution function (called \emph{enstrophy} for the incompressible
Euler equations, and related to the entropy for Vlasov--Poisson
equations) and the energy, respectively. For accuracy and physical
robustness, particularly for long time scales, it is important that the 
numerical scheme conserve the energy.  For numerical stability it is 
important that the $L_2$ norm be
conserved or decay monotonically. In addition, systems like
Vlasov--Maxwell/Poisson equations also satisfy other conservation laws
like total (particle plus fields) momentum. The total
number of ``particles'' ($\int f\thinspace d\mvec{z}$) is also
conserved (this is equivalent to the circulation or integrated vorticity in 
the 2D incompressible hydrodynamics case). 
In this section we examine the conservation and stability
properties of the scheme presented in the previous section. Appropriate
boundary conditions (or behavior at infinity) is assumed to eliminate global
surface terms in the following discussion.

\begin{proposition}
  The total number of particles is conserved exactly.
\end{proposition}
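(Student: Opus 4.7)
The plan is to test the discrete weak form \eqr{\ref{eq:dis-weak-form}} against the constant function $w\equiv 1$ and then sum over all cells. Since any reasonable polynomial space $\mvec{P}^p$ (Lagrange tensor, serendipity, or total-degree) contains the constants, $w\equiv 1$ is an admissible test function in $\mathcal{V}_h^p$.

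With $w\equiv 1$ one has $\nabla w = 0$, so the volume term on the right of \eqr{\ref{eq:dis-weak-form}} drops out and the equation reduces, in each cell $K_j$, to
\begin{align*}
  \frac{d}{dt}\int_{K_j} f_h\,d\mvec{z}
  + \oint_{\partial K_j} \mvec{n}\cdot\gvec{\alpha}_h\,\hat{F}(f_h^-,f_h^+)\,dS = 0.
\end{align*}
Summing this identity over all cells $K_j\in\mathcal{T}$ turns the volume integrals into an integral over the entire phase-space domain $\mvec{Z}$, while the surface integrals collapse to a sum over mesh faces, each interior face appearing twice. On a shared interior face, the two contributions use the same numerical flux $\hat{F}(f_h^-,f_h^+)$ (by construction of the DG scheme) and, by Lemma~\ref{lem:norm-alpha}, the same $\mvec{n}\cdot\gvec{\alpha}_h$ up to sign (since $\mvec{n}$ flips between the two neighboring cells). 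Hence the two contributions cancel exactly, and only the global boundary $\partial\mvec{Z}$ survives.

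Invoking the standing assumption that the boundary conditions (or decay at infinity) are such that global surface terms vanish, one concludes
\begin{align*}
  \frac{d}{dt}\int_{\mvec{Z}} f_h\,d\mvec{z} = 0,
\end{align*}
which is the desired statement of exact particle conservation in the semi-discrete (continuous-time) scheme.

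There is no real obstacle here; the result is essentially a consistency check that the DG machinery inherits the local conservation structure from its finite-volume roots. The only subtleties to flag explicitly are (i) that $w\equiv 1$ must actually be in $\mathcal{V}_h^p$ (true for every standard choice of $\mvec{P}^p$ considered in the paper), and (ii) that the single-valued numerical flux in \eqr{\ref{eq:num-flux}}—whether central ($c=0$) or upwind ($c=|\mvec{n}\cdot\gvec{\alpha}_h|$)—together with Lemma~\ref{lem:norm-alpha} guarantees face cancellation regardless of the flux choice. Exact conservation at the fully-discrete level would additionally require the time integrator to preserve the linear invariant, which is automatic for the SSP-RK method employed.
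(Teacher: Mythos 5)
Your proposal is correct and follows essentially the same route as the paper: choose $w=1$ in the discrete weak form \eqr{\ref{eq:dis-weak-form}}, note the volume term vanishes, and sum over cells so that the single-valued numerical flux (together with Lemma~\ref{lem:norm-alpha}) cancels all interior face contributions. The paper states this in one line; you have merely spelled out the cancellation and added the accurate observation that the linear invariant also survives the Runge--Kutta time discretization.
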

\begin{proof}
  Conservation of total number of particles follows immediately on
  selecting $w=1$ in the discrete weak-form,
  \eqr{\ref{eq:dis-weak-form}}, and summing over all cells leading to
  \begin{align}
    \pfraca{t} \sum_{K_j\in\mathcal{T}}\int_{K_j} f_h d\mvec{z} = 0.
  \end{align}
\end{proof}

\begin{proposition}
  The spatial scheme conserves total energy exactly.
\end{proposition}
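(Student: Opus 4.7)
The plan is to test the discrete weak form \eqr{\ref{eq:dis-weak-form}} against the Hamiltonian itself. Specifically, I would set $w = H_h$ in each cell, which is legal since $H_h \in \mathcal{W}^p_{0,h} \subset \mathcal{V}_h^p$, and then sum over all cells $K_j \in \mathcal{T}$. The choice is motivated by the continuous identity $\int H\{f,H\}\, d\mvec{z} = 0$, which is the mechanism for energy conservation at the PDE level.

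With this choice the volume term involving the gradient reduces to $\int_{K_j} \nabla H_h \cdot \gvec{\alpha}_h\, f_h\, d\mvec{z}$, and I would show that the integrand vanishes pointwise. Writing $\alpha_h^i = \Pi^{ij}\partial_j H_h$ as in the proof of Lemma \ref{lem:norm-alpha}, we get $\partial_i H_h\,\Pi^{ij}\partial_j H_h = 0$ by antisymmetry of the Poisson tensor, so this term drops out entirely (this is the discrete analog of $\{H,H\}=0$). For the surface terms, summing over cells I would pair contributions from the two sides of each internal face $\partial K_j \cap \partial K_{j+1}$. Continuity of $H_h$ gives $w^- = w^+$ on the face, Lemma \ref{lem:norm-alpha} gives $\mvec{n}\cdot\gvec{\alpha}_h^- = \mvec{n}\cdot\gvec{\alpha}_h^+$, and the numerical flux $\hat F$ in \eqr{\ref{eq:num-flux}} is single-valued, so the two contributions cancel (the outward normals from the two cells are opposite). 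Assuming boundary contributions vanish, what remains is
\begin{equation}
  \sum_{K_j \in \mathcal{T}} \int_{K_j} H_h\, \pfrac{f_h}{t}\, d\mvec{z} = 0.
\end{equation}

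Next I would substitute the Vlasov--Poisson Hamiltonian \eqr{\ref{eq:vp-hamil}}. Because $v_h^2$ is time-independent, the kinetic piece pulls out as a total time derivative $\frac{d}{dt}\int \tfrac{1}{2} m v_h^2 f_h\, d\mvec{z}$. Summing the potential piece over species and integrating in velocity identifies the charge density, giving $\int_\Omega \phi_h\, \partial_t \varrho_{ch}\, d\mvec{x}$. The final step is to convert this into a time derivative of field energy using the discrete Poisson equation \eqr{\ref{eq:dis-poisson}}: differentiate it in time (the space $\mathcal{X}^p_{0,h}$ is fixed, so $\partial_t \phi_h$ lies in the test space), then take $\psi = \phi_h$ (which is admissible as a test function), yielding
\begin{equation}
  \int_\Omega \phi_h\, \partial_t \varrho_{ch}\, d\mvec{x} = \epsilon_0 \int_\Omega \gcs \phi_h \cdot \gcs \partial_t \phi_h\, d\mvec{x} = \frac{d}{dt}\int_\Omega \frac{\epsilon_0}{2}|\gcs \phi_h|^2\, d\mvec{x}.
\end{equation}
Combining these pieces gives $\frac{d}{dt}\bigl[\sum_s \int \tfrac{1}{2}m_s v_h^2 f_{h,s}\, d\mvec{z} + \int \tfrac{\epsilon_0}{2}|\gcs \phi_h|^2\, d\mvec{x}\bigr] = 0$, which is the claimed exact conservation of total (particle plus field) energy by the spatial semidiscretization.

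The delicate point is the surface-term cancellation, which simultaneously exploits three ingredients of the scheme design: the continuity of $H_h$ (hence of the test function $w = H_h$), the continuity of $\mvec{n}\cdot\gvec{\alpha}_h$ from Lemma \ref{lem:norm-alpha}, and the single-valued numerical flux. Any one of these being absent would prevent the cancellation and introduce spurious energy exchange between cells. The complementary subtlety is that the Poisson-equation step requires $\partial_t \phi_h$ to lie in the same CG space $\mathcal{X}^p_{0,h}$ used for $\phi_h$; using the same continuous finite-element space for the potential that sits inside the DG space for $f_h$ is precisely what makes this identification, together with the choice $w = H_h$, self-consistent.
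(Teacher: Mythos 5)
Your proof is correct and follows essentially the same route as the paper's: testing against $w=H_h$, killing the volume term via antisymmetry of the Poisson tensor, cancelling surface terms using continuity of $H_h$ and of $\mvec{n}\cdot\gvec{\alpha}_h$ together with the single-valued flux, and then closing the argument with $\psi=\phi_h$ in the time-differentiated discrete Poisson equation. Your added remarks on the admissibility of $H_h$ as a test function and of $\partial_t\phi_h$ in $\mathcal{X}^p_{0,h}$ are welcome elaborations but do not change the argument.
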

\begin{proof}
  To prove energy conservation, select $w=H_h$ in
  \eqr{\ref{eq:dis-weak-form}}, to write
  \begin{align}
    \int_{K_j} H_h\pfrac{f_h}{t}d\mvec{z} + 
    \oint_{\partial K_j}H_h^-
    \mvec{n}\cdot\gvec{\alpha}_h\hat{F}dS 
    - \int_{K_j}
    \nabla H_h \cdot \gvec{\alpha}_h
    f_h d\mvec{z} = 0
  \end{align}
  As $\{H_h,H_h\}=\nabla H_h\cdot\gvec{\alpha}_h=0$, the last term
  vanishes. Physically, this is because the flow $\gvec{\alpha_h}$ is
  along contours of constant energy in phase space. On summing over
  all cells, the contribution from the surface integral term drops as
  the Hamiltonian and the numerical flux are both continuous and $\mvec{n}\cdot\gvec{\alpha}_h$
  differs only in sign for the two cells sharing a face. Thus we get
  \begin{align}
    \sum_{K_j\in\mathcal{T}} \int_{K_j} H_h\pfrac{f_h}{t}d\mvec{z} = 0.
    \label{eq:dis-ht-cons}
  \end{align}
  Note that this result holds for a general Hamiltonian system, and is
  independent of the numerical flux function selected.

  The above proves conservation of energy
  for the case where $H$ is static. For
  the more general time-dependent case, one must consider the field
  equation that relates the potential (and thus $H$) to $f$.  For
  example, to prove total (particle plus field) energy conservation
  for the Vlasov--Poisson equations, one must consider two pieces: the particle kinetic energy and the field energy. That
  is, we write $W=W_k + W_E$, where 
  the particle kinetic energy is $W_k = \sum_s \int \frac{1}{2}mv^2 f_h d\mvec{z}$, and the field energy is $W_E = \frac{\epsilon_0}{2}\int |\gcs \phi_h|^2$. Note that we can also write the total  energy as $W=W_H-W_E$, where $W_H = \sum_s \int H_h f_h$ is the total (kinetic plus twice-potential) particle energy.  
  [The term $\sum_s q \int \phi_h f_h d\mvec{z}$ in $W_H$ is the potential energy when the potential is externally imposed, but for self-consistent electrostatic interactions it double counts the potential energy from every pair of particle interactions. This results in the need to subract the field energy $W_E$ from $W_H$ to get the total energy in the system.]

  To show that the discrete scheme indeed conserves total energy for
  the Vlasov--Poisson equation, substitute \eqr{\ref{eq:vp-hamil}} in
  \eqr{\ref{eq:dis-ht-cons}} and sum over all species in the plasma
  \begin{align}
    \sum_s \sum_{K_j\in\mathcal{T}} \int_{K_j} 
    \left(
      \frac{1}{2} m v_h^2 + q
  \phi_h(\mvec{x},t)
    \right)
    \pfrac{f_h}{t}d\mvec{z} = 0.    
  \end{align}
  Recognizing the first term as the total particle kinetic energy, this can be written as 
  \begin{align}
    \frac{\partial W_k}{\partial t} + \sum_{\Omega_j\in\mathcal{T}_\mvec{x}} \int_{\Omega_j} 
      \phi_h(\mvec{x},t)\pfrac{\varrho_{ch}}{t}
    \thinspace d\mvec{x} = 0,
    \label{eq:tot-energy-t1}
  \end{align}
%  
%  Introducing the total particle kinetic energy by 
%  \begin{align}
%    \mathcal{E}_h(\mvec{x},t) \equiv 
%    \sum_s \
%    \int_{-\infty}^{\infty}
%    \frac{1}{2}mv^2_h f_h(t,\mvec{x},\mvec{v})\thinspace d^3\mvec{v}
%  \end{align}
%  this can be written as
%  \begin{align}
%    \sum_{\Omega_j\in\mathcal{T}_\mvec{x}} \int_{\Omega_j} 
%    \left(
%      \pfrac{\mathcal{E}_h}{t}
%      +
%      \phi_h(\mvec{x},t)\pfrac{\varrho_{ch}}{t}
%    \right)
%    \thinspace d\mvec{x} = 0,
%    \label{eq:tot-energy-t1}
%  \end{align}
  where $\mathcal{T}_\mvec{x}$ is the mesh in configuration
  space. Taking the time derivative of the discretized field equation \eqr{\ref{eq:dis-poisson}} and
  dropping surface terms (which vanish due to assumed boundary
  conditions) we get
  \begin{align}
    \int_\Omega \gcs \psi \cdot \gcs \pfrac{\phi_h}{t} \thinspace d\mvec{x}
    =
    \frac{1}{\epsilon_0}\int_\Omega \psi\pfrac{\varrho_{ch}}{t} \thinspace d\mvec{x}.
  \end{align}
  Selecting $\psi=\phi_h$ in this and using the resulting expression
  in \eqr{\ref{eq:tot-energy-t1}} we get
  \begin{align}
    \pfrac{W_k}{t}
    +
    \pfrac{W_E}{t} = 0
    \label{eq:tot-energy}
  \end{align}
  which shows that the spatial scheme conserves the total energy for the
  Vlasov-Poisson equations exactly.
\end{proof}
We remark that the exact form of the energy conservation proof depends
on the particular physical system under consideration and the above
procedure would need slight modification. In general, one needs to
show that $dW/dt = 0$, where $W = \int f_h H_h d\mvec{z}-W_E$, with $W_E$ containing possible field terms, and then use
the field equations to compute the time-derivative of the Hamiltonian
that appears in this to construct a total-energy. For example, for the long-wavelength
gyrokinetic equations the proofs that our scheme conserves the
discrete total energy are given in \cite{mandell-2019, Shi2017thesis,
  Shi2015}.
% GWH: Energy conservation proofs for the more complex full gyrokinetic equations are in earlier papers, but it's not high priority to include those right now.

\begin{proposition}\label{prop:l2}
  The spatial scheme exactly conserves the $L_2$ norm of the
  distribution function when using a central flux, while the distribution-function $L_2$ norm monotonically decays when using an upwind flux.
\end{proposition}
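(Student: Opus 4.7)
The plan is to mimic the energy-conservation proof but choose $w = f_h$ instead of $w = H_h$ in the discrete weak form \eqr{\ref{eq:dis-weak-form}}. Then $\int_{K_j} f_h\,\partial f_h/\partial t\,d\mvec{z} = \tfrac{1}{2}\,d/dt\int_{K_j} f_h^2\,d\mvec{z}$, and the volume term becomes
\begin{align*}
-\int_{K_j}\nabla f_h\cdot\gvec{\alpha}_h f_h\,d\mvec{z}
 = -\tfrac{1}{2}\int_{K_j}\gvec{\alpha}_h\cdot\nabla(f_h^2)\,d\mvec{z}
 = -\tfrac{1}{2}\int_{K_j}\nabla\cdot(\gvec{\alpha}_h f_h^2)\,d\mvec{z},
\end{align*}
where I would use the Liouville incompressibility $\nabla\cdot\gvec{\alpha}_h=0$ (and note that it is applied pointwise, not just in a weak sense, so this is a true identity in each cell). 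The divergence theorem then converts this into the surface integral $-\tfrac{1}{2}\oint_{\partial K_j}\mvec{n}\cdot\gvec{\alpha}_h (f_h^-)^2\,dS$.

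Next I would sum over all cells $K_j\in\mathcal{T}$ and organize the boundary contributions face-by-face. On an interior face shared by two cells with traces $f^L,f^R$ and common normal $\mvec{n}$ (oriented $L\to R$), Lemma \ref{lem:norm-alpha} guarantees that $a\equiv\mvec{n}\cdot\gvec{\alpha}_h$ is single-valued, and the numerical flux $\hat{\mathcal{F}}\equiv a\hat{F}=\tfrac{1}{2}a(f^L+f^R)-\tfrac{c}{2}(f^R-f^L)$ is likewise single-valued and flips sign with $\mvec{n}$. Adding the two cells' contributions to that face, the $\tfrac{1}{2}a(f^\mp)^2$ terms combine with $f^\mp\hat{\mathcal{F}}$ and collapse, after a short algebraic simplification, to
\begin{align*}
(f^L-f^R)\Bigl[\hat{\mathcal{F}} - \tfrac{1}{2}a(f^L+f^R)\Bigr]
= \tfrac{c}{2}(f^L-f^R)^2.
\end{align*}
Boundary faces are assumed to contribute nothing under the stated boundary conditions.

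Combining, I obtain
\begin{align*}
\tfrac{1}{2}\frac{d}{dt}\sum_{K_j\in\mathcal{T}}\int_{K_j} f_h^2\,d\mvec{z}
 = -\sum_{\text{interior faces}}\int_{\Gamma} \tfrac{c}{2}(f^L-f^R)^2\,dS.
\end{align*}
For the central flux ($c=0$) the right side vanishes identically, yielding exact $L_2$ conservation. For the upwind flux ($c=|\mvec{n}\cdot\gvec{\alpha}_h|\ge 0$) the right side is manifestly non-positive, and is strictly negative whenever $f_h$ has a jump across a face where $a\neq 0$; this gives monotonic decay, with a quantitative dissipation rate proportional to the square of the jump.

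The only real subtlety I expect is the careful bookkeeping of inside/outside traces and the sign of the outward normal when pairing the two cells across a face; getting this right is what produces the clean $(f^L-f^R)^2$ dissipation term. A secondary caveat worth stating is that the identity $\tfrac{1}{2}\nabla\cdot(\gvec{\alpha}_h f_h^2) = \gvec{\alpha}_h\cdot\nabla f_h\,f_h$ is used inside each cell, and the resulting integrals must be evaluated with quadrature of sufficient order (as noted after \eqr{\ref{eq:num-flux}}) so that the manipulation above is exact at the discrete level.
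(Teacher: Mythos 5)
Your proposal is correct and follows essentially the same route as the paper's proof: test with $w=f_h$, use phase-space incompressibility to turn the volume term into a surface term, and then cancel or collect the face contributions, with the central-flux terms vanishing pairwise and the upwind flux leaving a non-positive dissipation proportional to $|\mvec{n}\cdot\gvec{\alpha}_h|(f_h^+-f_h^-)^2$. The only cosmetic difference is that you treat both fluxes at once via the parameter $c$ and organize the bookkeeping face-by-face, whereas the paper substitutes the central and upwind fluxes separately and sums over cells; the resulting dissipation terms agree.
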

\begin{proof}
  The proof follows Liu and Shu \cite{Liu:2000ee}. Use $w=f_h$ in
  \eqr{\ref{eq:dis-weak-form}}, to write
  \begin{align}
    \pfraca{t} \int_{K_j} \frac{1}{2}f_h^2 \thinspace d\mvec{z} + 
    \oint_{\partial K_j}f_h^-
    \mvec{n}\cdot\gvec{\alpha}_h\hat{F} \thinspace dS 
    - \int_{K_j}
    \nabla f_h \cdot \gvec{\alpha}_h f_h \thinspace d\mvec{z} = 0.
  \end{align}
  From incompressibility, we can write $\nabla f_h \cdot
  \gvec{\alpha}_h f_h = \nabla\cdot (\gvec{\alpha}_h f_h^2/2)$, and
  performing an integration by parts we get
  \begin{align}
    \pfraca{t} \int_{K_j} \frac{1}{2}f_h^2 \thinspace d\mvec{z} + 
    \oint_{\partial K_j}
    \mvec{n}\cdot\gvec{\alpha}_h f_h^-
    \big(
      \hat{F} - \frac{1}{2} f_h^-
    \big) \thinspace dS 
    = 0.
  \end{align}
  If a central flux $\hat{F}=(f_h^+ + f_h^-)/2$ is used, then the
  integrand of the second term becomes $\mvec{n}\cdot\gvec{\alpha}_h
  f_h^- f_h^+/2$, which has opposite signs for the two cells sharing a
  face. Hence, summing over all cells gives the conservation law
  \begin{align}
    \pfraca{t} 
    \sum_{K_j\in\mathcal{T}} \int_{K_j} \frac{1}{2} f_h^2 \thinspace d\mvec{z} = 0.
  \end{align}
  For an upwind flux [$c=|\mvec{n}\cdot\gvec{\alpha}|$ in
  \eqr{\ref{eq:num-flux}}], the integrand in the second term becomes
  \begin{equation}
  \mvec{n}\cdot\gvec{\alpha}_h f_h^- f_h^+/2 -
  |\mvec{n}\cdot\gvec{\alpha}_h|[(f_h^+)^2-(f_h^-)^2]/4 +
  |\mvec{n}\cdot\gvec{\alpha}_h|(f_h^+-f_h^-)^2/4.
  \end{equation}
  Summing over all cells
  and dropping terms that vanish on summation, we get
  \begin{align}
    \pfraca{t} 
    \sum_{K_j\in\mathcal{T}} \int_{K_j} \frac{1}{2} f_h^2 \thinspace
    d\mvec{z} 
    + \frac{1}{4} \sum_{K_j\in \mathcal{T}} \oint_{\partial K_j}
    |\mvec{n}\cdot\gvec{\alpha}| (f_h^+-f_h^-)^2 dS
    = 0.
  \end{align}
  As the second integral is always positive, this shows that the $L_2$
  norm decays monotonically when using an upwind flux. This
  proposition shows that the scheme is stable in the $L_2$ norm of the
  distribution function.
\end{proof}

It should be remarked that the proof of energy conservation and $L_2$
conservation with a central flux (or monotonic decay of $L_2$ with an upwind flux)
critically relies on the fact that the Hamiltonian is continuous, from
which it follows that the normal component of the characteristic
velocity is continuous, as shown in
Lemma\thinspace\ref{lem:norm-alpha}. This, in combination with the
fact that the fields appearing in the Hamiltonian lie in a continuous
subspace of the space containing the distribution function, leads to
energy conservation.

Another remark is that even though the spatial scheme conserves total
energy, the fully discrete scheme (including time discretization) generally does
not. That is, unless a time-reversible scheme is used to advance time,
the total energy will be conserved only to the order of the
time-discretization scheme. 
% The use of time-reversible scheme for general Hamiltonian system
% usually involves use of implicit schemes, and is not considered
% here.
However, the conservation errors, even for a non-reversible scheme, will
be \emph{independent} of the phase-space discretization, and can be
reduced by taking a smaller time step, if desired.

\begin{proposition}\label{prop:entropy}
  If the discrete distribution function $f_h$ remains positive
  definite, then the discrete scheme grows the discrete entropy
  monotonically,
  \begin{align}
    \frac{d}{dt}\int_{K_j} -f_h \ln(f_h) \thinspace d\mvec{z} \ge 0.
  \end{align}
\end{proposition}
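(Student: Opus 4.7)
The plan is to mimic, at the discrete level, the continuous-in-space calculation that makes the entropy of a Hamiltonian flow invariant,
\[
  \frac{d}{dt}\int -f\ln f \thinspace d\mvec{z}
  = -\int (1+\ln f)\pfrac{f}{t}\thinspace d\mvec{z}
  = \int (1+\ln f)\nabla\cdot(\gvec{\alpha} f)\thinspace d\mvec{z} = 0,
\]
where the last equality follows by integrating by parts and invoking incompressibility $\nabla\cdot\gvec{\alpha} = 0$. The positivity assumption on $f_h$ is exactly what is needed to differentiate $-f_h\ln f_h$ pointwise and to make the logarithmic manipulations meaningful.

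The central obstacle, which makes this proposition different from Propositions \ref{prop:l2} or the energy proof, is that $1+\ln f_h$ is \emph{not} a polynomial and so is not an admissible test function in \eqr{\ref{eq:dis-weak-form}}. To bypass this I would use the fact that $\pfracb{f_h}{t}\in\mathcal{V}_h^p$, so that the integral against $1+\ln f_h$ is unchanged if the latter is replaced by its cellwise $L_2$ projection $\pi_h(1+\ln f_h)\in\mathcal{V}_h^p$. Setting $w = -\pi_h(1+\ln f_h)$ then gives a legitimate test function and
\[
  \frac{d}{dt}\int_{K_j} -f_h\ln f_h \thinspace d\mvec{z}
  = \int_{K_j} w\pfrac{f_h}{t} \thinspace d\mvec{z},
\]
into which \eqr{\ref{eq:dis-weak-form}} can be substituted. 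The volume contribution $\int_{K_j} \nabla w\cdot\gvec{\alpha}_h f_h\thinspace d\mvec{z}$ should then be manipulated, up to projection residuals, by $\gvec{\alpha}_h\cdot\nabla(1+\ln f_h) = \gvec{\alpha}_h\cdot\nabla f_h/f_h$ and incompressibility, reducing it to a cell-boundary contribution that combines with the numerical-flux surface integral.

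After summing over all cells and invoking continuity of $\mvec{n}\cdot\gvec{\alpha}_h$ across faces (Lemma \ref{lem:norm-alpha}), the interior-face contributions collect into pairings of $\mvec{n}\cdot\gvec{\alpha}_h\hat{F}$ against jumps of $\ln f_h$. For the central flux these telescope to zero, so the spatial scheme conserves entropy exactly. For the upwind flux \eqr{\ref{eq:num-flux}}, the surviving remainder is a sum over faces of terms proportional to $|\mvec{n}\cdot\gvec{\alpha}_h|(f_h^+-f_h^-)(\ln f_h^+ - \ln f_h^-)$, which is non-negative because $\ln$ is strictly increasing on $(0,\infty)$. This monotonicity of $\ln$, together with the positivity of $f_h^\pm$, is what ultimately signs the right-hand side and delivers $dS/dt\ge 0$.

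The hard part will be controlling the projection residual $1+\ln f_h - \pi_h(1+\ln f_h)$ that appears in the volume manipulation; unlike the $L_2$ and energy proofs, the natural test function here is outside $\mathcal{V}_h^p$, so the identity $\gvec{\alpha}_h\cdot\nabla f_h = \nabla\cdot(\gvec{\alpha}_h f_h)/1$ does not translate into a clean cellwise cancellation but instead leaves a projection-error term that must either vanish on summation or be absorbed into the manifestly non-negative upwind face remainder. The positive-definiteness hypothesis on $f_h$ is essential twice: once to make $\ln f_h$ well-defined, and again to guarantee the sign of the jump product $(f_h^+ - f_h^-)(\ln f_h^+ - \ln f_h^-)$ that drives entropy production.
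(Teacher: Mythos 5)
Your strategy is genuinely different from the paper's, and it has a real gap that you flag yourself but do not close. The entire weight of your argument rests on converting the volume term $\int_{K_j}\nabla w\cdot\gvec{\alpha}_h f_h\thinspace d\mvec{z}$ with $w=-\pi_h(1+\ln f_h)$ into a pure boundary contribution via the chain-rule identity $\gvec{\alpha}_h\cdot\nabla(\ln f_h)\, f_h=\gvec{\alpha}_h\cdot\nabla f_h=\nabla\cdot(\gvec{\alpha}_h f_h)$. That identity holds for $\ln f_h$ but not for its cellwise $L_2$ projection: $\nabla\pi_h(1+\ln f_h)\ne \nabla f_h/f_h$, and the residual $\int_{K_j}\nabla\bigl[\pi_h(1+\ln f_h)-(1+\ln f_h)\bigr]\cdot\gvec{\alpha}_h f_h\thinspace d\mvec{z}$ has no definite sign, does not telescope between neighboring cells, and cannot be absorbed into the upwind face term (which is only $O(\llbracket f_h\rrbracket^2)$ on faces, while the residual lives in cell interiors). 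The same projection mismatch contaminates the surface integrals: after substituting \eqr{\ref{eq:dis-weak-form}} the traces that appear are those of $\pi_h(1+\ln f_h)$, not $1+\ln f_h^{\pm}$, so the clean jump pairing $|\mvec{n}\cdot\gvec{\alpha}_h|(f_h^+-f_h^-)(\ln f_h^+-\ln f_h^-)$ you want to sign never actually materializes. Your further claim that the central flux conserves entropy exactly is not established (and is not claimed by the paper); $-f\ln f$ is not a quadratic invariant, and only the $L_2$ norm is exactly conserved by the central-flux spatial scheme (Proposition \ref{prop:l2}).

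The paper avoids all of this by never testing against a logarithm. It uses the elementary pointwise bound $\ln(f_h)\le f_h-1$ for $f_h>0$, hence $-f_h\ln(f_h)\ge -f_h^2+f_h$, integrates and sums over cells, and then invokes the two results already in hand: $\sum_j\int_{K_j}f_h\thinspace d\mvec{z}$ is constant (particle conservation) and $\sum_j\int_{K_j}f_h^2\thinspace d\mvec{z}$ is constant or monotonically decaying (Proposition \ref{prop:l2}), so the lower bound on the entropy is non-decreasing and the entropy production is attributed to the $L_2$ decay of the upwind flux. That argument requires no admissible-test-function gymnastics and no control of projection errors. You correctly identified that positivity is needed to make $\ln f_h$ meaningful and that the upwind dissipation is the source of entropy growth, but to produce a complete proof you should either adopt the paper's inequality route or supply an actual bound on the projection residual, which your sketch does not provide.
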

\begin{proof}
  We have the bound $\ln(f_h) \le f_h-1$ as long as
  $f_h>0$. Multiplying by $-f_h$ gives us the inequality
  \begin{align}
    -f_h \ln(f_h) \ge -f_h^2 + f_h.
  \end{align}
  The left-hand side is just the discrete entropy for the Vlasov
  equation. Integrating over a phase-space cell, summing over cells,
  and taking the time derivative of both sides gives us an expression
  for the time evolution of the discrete entropy in our scheme
  \begin{align}
    \sum_j \frac{d}{dt} \int_{K_j} -f_h \ln(f_h) \thinspace d\mvec{z}
    \ge \sum_j \frac{d}{dt} \int_{K_j} \left( -f_h^2 + f_h \right) \thinspace d\mvec{z}.
  \end{align}
  The second term on the right-hand side vanishes as number density is
  conserved. Also, the discrete entropy either
  remains constant, or increases monotonically, as we have previously shown that $f_h^2$ remains constant or
  decays monotonically.
\end{proof}
Note that the scheme presented above does not ensure that $f_h>0$ by construction. Maintaining positivity and conservation simultaneously for Hamiltonian systems can be subtle as the conservation is indirect, involving integration by parts and exchange of energy between particles and fields. We will present a novel algorithm for positivity that continues to maintain total conservation for kinetic systems in a future publication.

In addition to particles and total energy, the Vlasov--Poisson
equations also conserve total momentum. Multiplying the Vlasov
equation by $m \mvec{v}$, integrating over all phase space and summing
over plasma species we get
\begin{align}
  \pfraca{t}  \sum_s \int_\Omega\int_{-\infty}^\infty 
  m\mvec{v} f \thinspace d^3\mvec{v}\thinspace d\mvec{x}
  +
  \int_\Omega ( \gcs \phi ) \varrho_c \thinspace d\mvec{x}
  = 0. \label{eq:ex-mom-t1}
\end{align}
Using the Poisson equation to eliminate charge density and integrating by parts leads to
$\int_\Omega ( \gcs \phi ) \rho_c d \mvec{x} = 
- \epsilon_0 \int_\Omega ( \gcs \phi ) \gcs^2 \phi =
-  \epsilon_0 \int_\Omega \gcs |\gcs \phi |^2 / 2 = 0
$
so the total momentum is conserved:
\begin{align}
  \pfraca{t}  \sum_s \int_\Omega\int_{-\infty}^\infty 
  m\mvec{v} f \thinspace d^3\mvec{v}\thinspace d\mvec{x}
  = 0.
  \label{eq:ex-mom}
\end{align}
Periodic boundary conditions were used to eliminate surface terms in
the above expressions. Note that for electrostatic problems the (gradient of) potential does not appear in the momentum, i.e. an
electrostatic field does not carry momentum. For the Vlasov--Maxwell
equations, on the other hand, the \emph{total} momentum contains terms proportional to $\mvec{E}\times\mvec{B}$.

To check if the scheme conserves momentum for the Vlasov--Poisson
equations, we set $w= m \mvec{v}_h$ in the discrete weak-form,
\eqr{\ref{eq:dis-weak-form}}, and sum over all cells and species in
the plasma to get
\begin{align}
  \sum_s \sum_{K_j\in\mathcal{T}} \int_{K_j}
  \left(
    m \mvec{v}_h \pfrac{f_h}{t}
    - 
    m \nabla \mvec{v}_h \cdot \gvec{\alpha}_h
    f_h 
  \right)
  \thinspace d\mvec{z} = 0.
\end{align}
We have used the continuity of $\mvec{v}_h$ to drop the surface term
on summation. Introducing the total particle momentum via
\begin{align}
  \mvec{M}_h(\mvec{x},t) \equiv 
  \sum_s \
  \int_{-\infty}^{\infty}
  m\mvec{v}_h f_h(t,\mvec{x},\mvec{v})\thinspace d^3\mvec{v}
\end{align}
and noticing that $m \nabla \mvec{v}_h \cdot \gvec{\alpha}_h =
m\{\mvec{v}_h,H_h\} = -q \gcs\phi_h$ we get
\begin{align}
  \int_{\Omega}
  \pfrac{\mvec{M}_h}{t}
  \thinspace d\mvec{x}
  +
  \sum_{\Omega_j\in\mathcal{T}_\mvec{x}} \int_{\Omega_j}
    \gcs\phi_h\varrho_{ch}
  \thinspace d\mvec{x} = 0.
  \label{eq:mom-non-con}
\end{align} %
For the scheme to conserve momentum, the second term in the above
expression must vanish. However, the elimination of $\varrho_{ch}$
using the discrete Poisson equation, analogous to the one made in
deriving the exact momentum conservation equation,
(\eqr{\ref{eq:ex-mom}}), cannot be done here as the weak-form
\eqr{\ref{eq:dis-poisson}} does not preserve integration by parts
\emph{locally}. Another way to state this is that one cannot use
$\psi=\gcs\phi_h$ as a test function in \eqr{\ref{eq:dis-poisson}} as
$\gcs\phi_h$ is (in general) discontinuous and hence not in
$\mathcal{W}^p_{0,h}$. Hence, in general, the scheme presented here
will not conserve total momentum exactly.

However, it should be noted that the errors in discrete momentum conservation
depend only on the accuracy of the spatial integral in the second term in \eqr{\ref{eq:mom-non-con}} (which cancels exactly in the continuous limit due to the steps in going from 
\eqr{\ref{eq:ex-mom-t1}} to \eqr{\ref{eq:ex-mom}}).
Thus the scheme will converge towards momentum conservation depending only on the spatial resolution, independent of velocity resolution.

There are other variations of DG that can conserve momentum exactly instead of energy, but we have not found a DG algorithm that can conserve energy and momentum simultaneously.  
This is similar to the situation with the PIC algorithms in Birdsall and Langdon \cite{birdsallbook}, which conserve either momentum or energy but not both.
Conserving total momentum might be more important for some applications, but in principle there could be large positive momentum errors for some particles offset by large negative momentum errors for other particles while still conserving total momentum.  
This seems a less useful constraint on the dynamics than conservation of energy, a positive definite quantity.
This is one of the reasons we chose to conserve energy.

\section{Benchmarks for the Vlasov--Poisson and incompressible Euler
  equations}

In this section we apply the scheme developed in the previous sections
to the Vlasov--Poisson and 2D incompressible Euler equations. For
simplicity, we study the the system in 2D phase-space (1D/1V) and
assume that the ions are stationary. The Hamiltonian for this problem
is $H=m_ev^2/2 -|e|\phi(t,\mvec{x})$, where $m_e$ is the electron mass, and
$-|e|$ the electron charge. The Poisson bracket is
$\{f,g\}= (\gcs f \gvs g - \gvs g \gcs f)/m_e$, which is canonical except for the
$1/m_e$ term. In the first series of tests we use a time-independent,
but spatially varying potential profile. In the second set of tests we
evolve the potential using Poisson's equation,
\eqr{\ref{eq:vp-poisson}}. As in the previous section, we use ``SP1'' to denote the use of first-order serendipity polynomial basis functions and ``SP2'' for second-order serendipity polynomial basis functions.

We should remark that the schemes presented are being used in our
production code {\tt Gkeyll} to solve the far-more-complex gyrokinetic
equations. For example, the scheme was used to study plasma turbulence
in straight and helical field-line
geometries~\cite{Shi2017thesis,Shi2017,Shi2019}, including comparison
with experiments \cite{Bernard:2019be}. Recent work includes extension
to the highly challenging case of the full-$f$ electromagnetic gyrokinetic
system \cite{mandell-2019}. Each of these applications also includes
collision terms modeled using a model Fokker--Planck
operator~\cite{Hakim:2019wu} which is itself discretized using a
momentum- and energy-conserving DG scheme. The problems presented below
test the scheme in simpler settings, but our current physics research
shows that the essential scheme is robust and scalable to very complex
plasma problems. To allow an interested reader to reproduce our results, instructions to obtain \gke\ and the input files used
here are given in the appendix.

\subsection{Conservation properties}

To test momentum and energy conservation a series of simulations with
the Vlasov--Poisson equations are performed. The ions are assumed
to be stationary and only the electron distribution function is evolved. The
electron distribution function is initialized as

\begin{align}
  f(x,v,0) = 
  \left\{
    \begin{array}{l l}
      \{1+\exp\left[-\beta_l(x-x_m)^2\right]\}\thinspace f_m(T_e, v_d) \quad x<x_m \\
      \{1+\exp\left[-\beta_r(x-x_m)^2\right]\}\thinspace f_m(T_e, v_d) \quad x \ge x_m  
    \end{array}
  \right.,
\end{align}
where $\beta_l = 0.75$, $\beta_r = 0.075$, $v_d=1.0$, $T_e=1.0$ and
$x_m=-\pi$. Further, $f_m(T,v_d)$ is a drifting Maxwellian with a
specified temperature and drift velocity:
\begin{align}
  f_m(T,v_d) = \frac{1}{\sqrt{2\pi v_t^2}} \exp[-(v-v_d)^2/2v_t^2],
\end{align}
where $v_t=\sqrt{T/m}$. The initial conditions drive strong asymmetric
flows around $x=x_m$ from the asymmetric number-density profile. Note
that if a symmetric initial profile is used the net initial momentum
in the system is zero and will remain so (to machine precision)
as the solution evolves.

\begin{table}[htbp]
  \caption{Error in momentum conservation for Vlasov--Poisson equations
    with serendipity polynomial order one (SP1), and two (SP2)
    schemes. For each spatial resolution ($N_x$) two velocity space
    resolutions were used ($N_v=32,128$). Simulations are run to
    $t=20$ and errors measured. Momentum errors are insensitive to
    velocity space resolution and converge rapidly with spatial
    resolution and polynomial order.}
  \centering
  \begin{tabular}{ccc|cc}
    \hline
     & \multicolumn{2}{c|}{SP1} & \multicolumn{2}{c}{SP2} \\
    $N_x$ & $N_v$ (32) & $N_v$ (128) & $N_v$ (32) & $N_v$ (128) \\
    \hline
    8 & 1.4052(-3) & 1.3332(-3) & 1.9480(-5)  & 1.9398(-5) \\
    16 & 3.6887(-4) & 3.9308(-4) & 6.9063(-7) &  6.8864(-7) \\
    32 & 6.3612(-5) & 8.5969(-5) & 5.9614(-8) & 5.1174(-8) \\
    64 & 9.0199(-6) & 1.5253(-5) & 2.2169(-9) & 2.2288(-9) \\
    \hline
  \end{tabular}
  \label{tb:mom-err}
\end{table}

Simulations were performed with serendipity basis functions with
polynomial order one (SP1) and two (SP2). For all problems the domain
is $[-2\pi\times 2\pi]\times [-10,10]$. Tests of momentum conservation
are shown in Table~\ref{tb:mom-err}. Even though momentum is not
conserved, the results confirm that the errors in momentum conservation
are insensitive to the velocity space resolution (there is some very weak dependence due to the discrete initial conditions depending on the velocity resolution). In addition, the
errors reduce rapidly with spatial resolution and increasing
polynomial order.

\begin{table}[htbp]
  \caption{Error in energy conservation for Vlasov--Poisson equations
    with serendipity polynomial order one (SP1), and two (SP2)
    schemes. A fixed resolution of $(N_x,N_v)=(16,32)$ is used,
    while the time step is varied by changing the CFL
    number. Simulations are run to $t=20$, at which time the errors are measured. Energy
    errors are solely from the non-reversible nature of the strong
    stability preserving Runge--Kutta order 3 scheme used, and converge
    with the same order as the time-stepping scheme. Errors also
    reduce rapidly with increasing polynomial order.}
  \centering
  \begin{tabular}{ccc|cc}
    \hline
     & \multicolumn{2}{c|}{SP1} & \multicolumn{2}{c}{SP2} \\
    CFL & Error & Order & Error & Order \\
    \hline
    0.3 &  1.4185(-6) &  & 4.1646(-7) & \\
    0.15 & 1.7687(-7) & 3.0 & 5.1978(-8) & 3.0 \\
    0.075 & 2.2078(-8) & 3.0 & 6.4914(-9) & 3.0 \\
    0.0375 & 2.7587(-9) & 3.0 & 8.1295(-10) & 3.0 \\
    \hline
  \end{tabular}
  \label{tb:energy-err}
\end{table}

Energy-conservation tests are performed with the same initial
condition, however with on a fixed grid of $(N_x,N_v)=(16,32)$. As
mentioned, even though the spatial scheme conserves momentum exactly,
use of a SSP Runge--Kutta (RK) scheme (which is not reversible) means
that the time stepping will introduce errors in energy, which scale
as the order of the RK scheme selected. This is clearly seen in
Table~\ref{tb:energy-err} which shows that the energy-conservation
errors reduce as $O(\Delta t^3)$, the same as the order of the RK
scheme selected.

\subsection{Free streaming and recurrence}

In the first test we set $\phi = 0$, which leads to the free streaming
of particles. In this case, the exact solution is $f(x,v,t) =
f(x-vt,v,0)$, \emph{i.e.} at each point in velocity space the initial
distribution advects with a constant speed. However, even though the
distribution function is manifestly undamped, its \emph{moments} are
damped. To see this pick an initial condition a Maxwellian $f_M(x,v) =
1/\sqrt{2\pi v_t^2} \exp(-v^2/2v_t^2) \cos(kx)$, where $v_t$ is the
thermal velocity and $k$ is the wave-number. Then, the exact solution
is $f(x,v,t) = f_M(x-vt,v)=1/\sqrt{2\pi v_t^2}\exp(-v^2/2v_t^2)
\cos\left( k(x-vt) \right)$. The increasingly oscillatory nature of
the $\cos\left[ k(x-vt) \right]$ term results in \emph{phase mixing}
due to which all moments of the distribution function are damped. For
example, the number density is
\begin{align}
  n(x,t) = \int_{-\infty}^\infty f dv = e^{-k^2v_t^2t^2/2} \cos(kx),
\end{align}
which is exponentially damped.

\begin{figure}
  \setkeys{Gin}{width=0.75\linewidth,keepaspectratio}
  \incfig{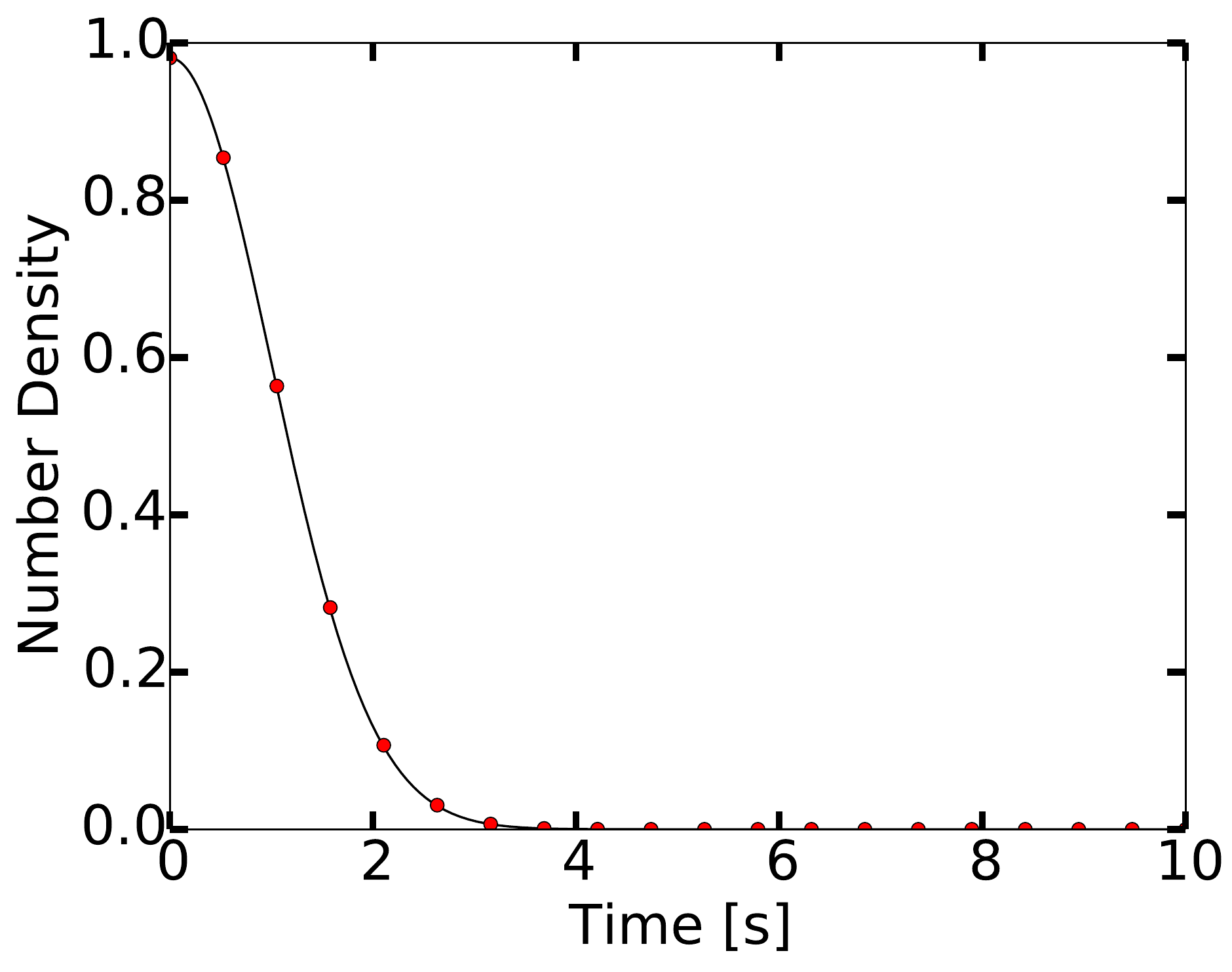}
  \caption{Number density (black) in cell 2 as a function of time
    for a free-streaming problem.
    SP1 basis functions with upwind flux are
    used on a $(N_x,N_v)=(64, 64)$ grid. The red dots show the exact
    solution. At this resolution the numerical results are
    indistinguishable from the exact solution.}
  \label{fig:num-dens-landau-damp}
\end{figure}

To test the ability of the algorithm to model this damping, a
simulation is initialized with a Maxwellian with $v_t=1$ and $k=1$ on
a domain $(x,v) \in [0,2\pi] \times [-6,6]$. The number density is
computed and recorded in a specified cell. The simulation is run on a
$(N_x,N_v)=(64,64)$ grid using SP1 basis functions.
The number density measured in a cell is shown in
Fig.~\ref{fig:num-dens-landau-damp}. At this resolution the numerical
results are indistinguishable from the exact solution.

% \begin{figure}[htb]
%   \setkeys{Gin}{width=0.75\linewidth,keepaspectratio}
%   \incfig{s14_4567_-vlasov-free-stream_numDensInCell.pdf}
%   \caption{Number density (black) as a function of time with SP1 on 8
%     velocity cells (top left), SP1 on 16 velocity cells (top right),
%     SP2 on 8 velocity cells (bottom left) and SP2 on 16 velocity cells
%     (bottom right). The red line shows the exact solution. The
%     recurrence is clearly visible in the SP1 scheme, and occurs later
%     as the velocity grid is refined. Exact recurrence in the
%     third-order scheme is not seen on this time-scale, due to the
%     incommensurate nature of the SP2 basis functions. Significant
%     errors from discrete velocity grid effects are see in both cases,
%     which can be delayed by use of a finer velocity space mesh.}
%   \label{fig:num-dens-recur}
% \end{figure}

The use of a discrete grid, however, combined with the lack of
physical (or artificial) dissipation, leads to \emph{recurrence},
i.e. the initial condition will recur after some finite time. This
occurs as the phases of the discrete solution decohere temporarily
leading to collisionless damping, however, after a finite time combine
together again to recreate the initial conditions. A small amount of damping (via collisions or hyper-collisions), however, can easily control this
recurrence even on a coarse grid.

\subsection{Particle motion in specified potential}

In this set of problems the potential $\phi(x)$ is held fixed in time
and the distribution is evolved. These cases correspond to the motion of
test particles in a specified potential. In each case the initial
distribution is assumed to be a uniform Maxwellian
\begin{align}
  f(x,v,0) = \frac{1}{\sqrt{2\pi v_t}} \exp(-v^2/2v_t^2)
\end{align}
with $v_t=1$. If the potential has a well (a minima) then a fraction
of the particles will be trapped and appear as rotating vortices in
the distribution function plots. The bounce period for a particle with
total energy $E$ (which is a constant of motion) in a well can be
computed as
\begin{align}
  T(E) = \sqrt{2m} \int_{x_1(E)}^{x_2(E)} \frac{dx}{\sqrt{E-\phi(x)}},
\end{align}
where $x_1$ and $x_2$ are the roots of the equation $\phi(x)=E$,
\emph{i.e.} the turning points at which the motion of the particle reverses. For finite $x_1$ and $x_2$ the motion is periodic. Note that for
a non-singular distribution (like the Maxwellian) the bounce period
need not be the same for all the particles. In this case an average
period can be computed.

\begin{figure}
  \setkeys{Gin}{width=0.45\linewidth,keepaspectratio}
  \incfig{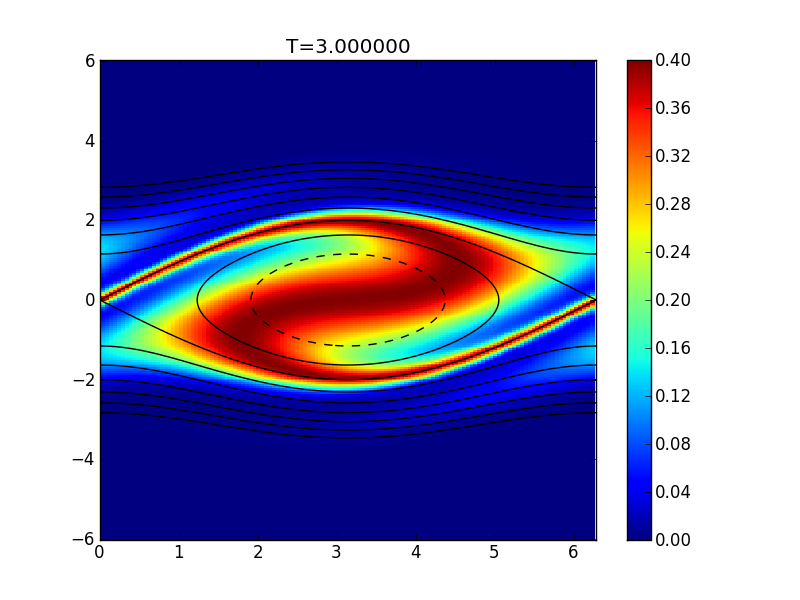}
  \incfig{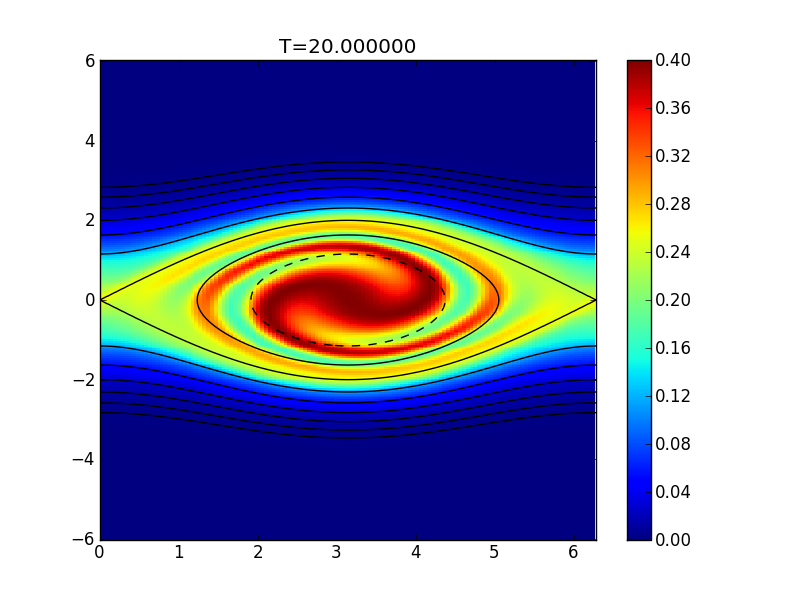}
  \caption{Distribution function at $t=3$ (left) and $t=20$ for flow
    in a $\cos(x)$ potential well. The black lines show contours of
    constant particle energy. A separatrix forms along the
    trapped--passing boundary.}
  \label{fig:cosx-pot}
\end{figure}

For the first test we initialize $\phi(x) = \cos(x)$. Simulations were
run with a SP1 scheme on a $64\times 128$ grid for
$(x,v) \in [0,2\pi] \times [-6,6]$. In this potential the bounce
period of a single particle depends on its initial energy, and the
particles with lower total energy bounce faster. Snapshots of the
distribution function are shown at a $t=3$ and $t=20$ in
Fig.\thinspace\ref{fig:cosx-pot}.

\begin{figure}
  \setkeys{Gin}{width=0.45\linewidth,keepaspectratio}
  \incfig{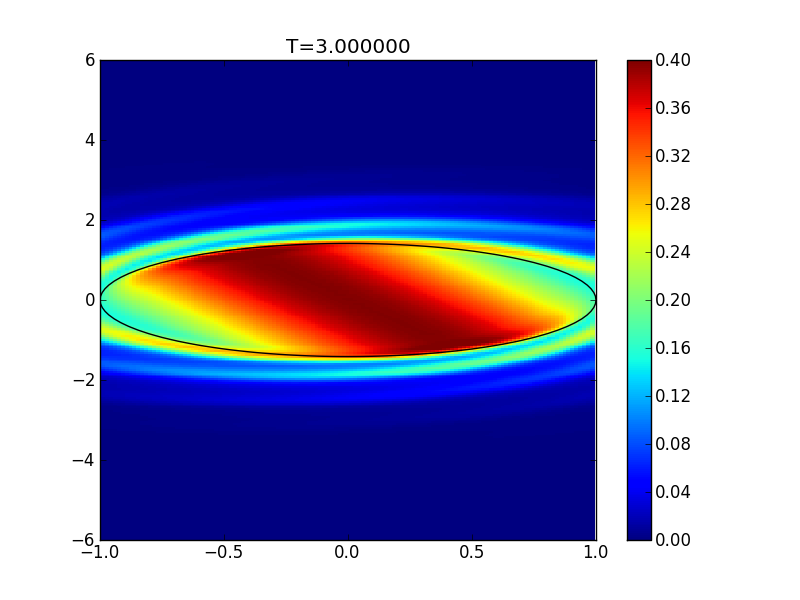}
  \incfig{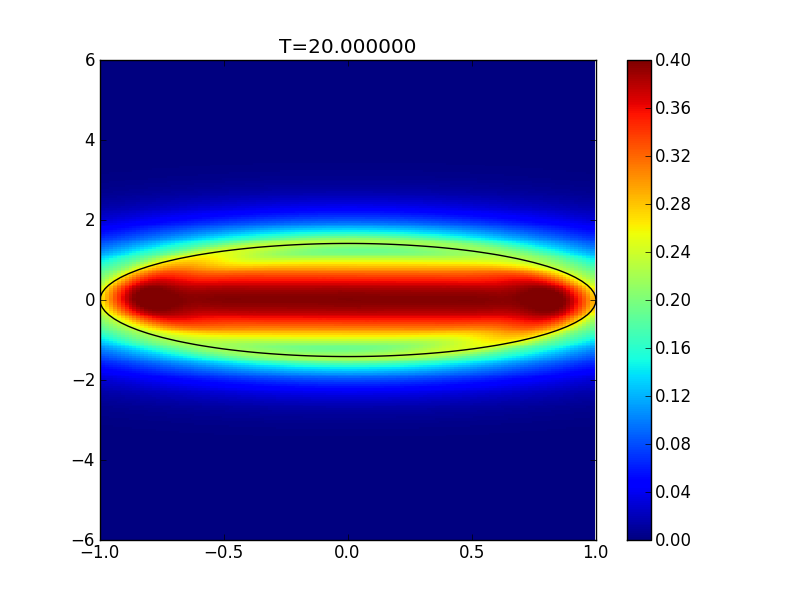}
  \caption{Distribution function at $t=3$ (left) and $t=20$ (right)
    for flow in a potential well. The black line shows the
    trapped--passing energy contour. Due to the quadratic potential $\phi(x) = x^2$,
    all particles inside the trapped region move with the same angular
    velocity in phase space, and the motion appears like
    rigid-body rotation.}
  \label{fig:quad-pot}
\end{figure}

In the second test we set $\phi(x) = x^2$. This quadratic potential
corresponds to simple harmonic motion, and the bounce period of all
trapped particles are the same and can be computed as
$\pi\sqrt{2} \approx 4.443$. Also, as the bounce period for all
trapped particles is the same, these will move ``rigidly'' in
phase-space, \emph{i.e.} the motion along contours of constant energy will
occur with the same frequency. These features are clearly seen in the
snapshots of the distribution function shown at $t=3$ and $t=20$ in
Fig.\thinspace\ref{fig:quad-pot}.

\subsubsection{Landau Damping}

\begin{figure}
  \setkeys{Gin}{width=0.45\linewidth,keepaspectratio}
  \incfig{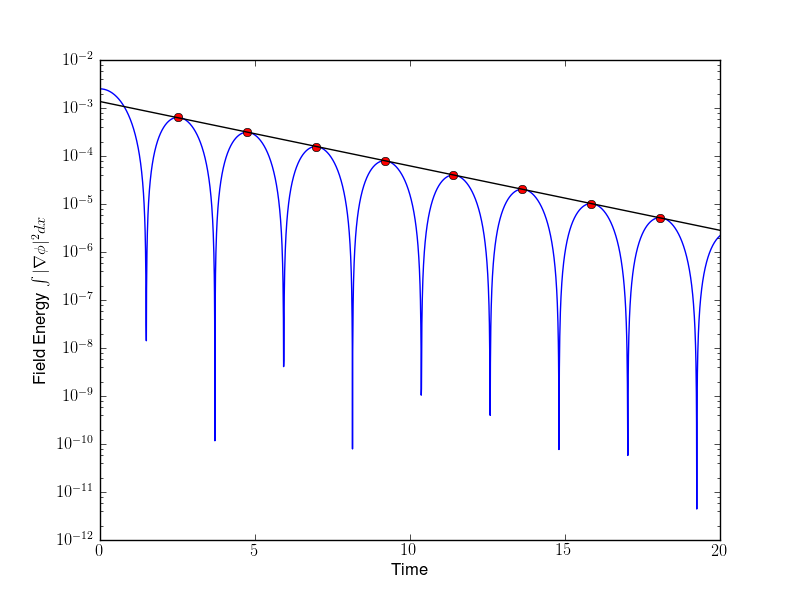}
  \incfig{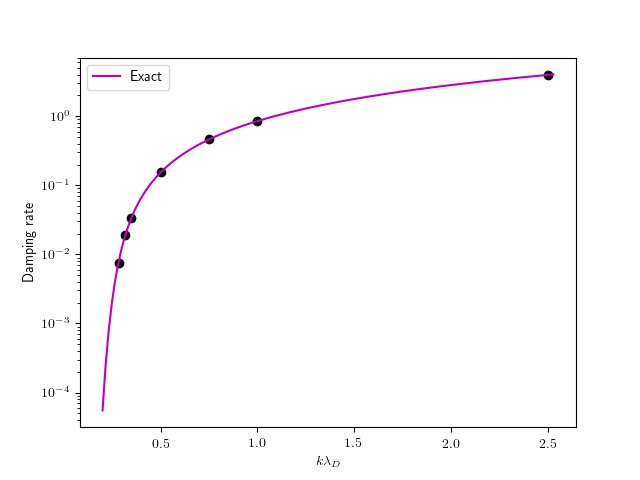}
  \caption{(Left) Field energy as a function of time for the linear
    Landau-damping problem with $k = 0.5$ and $T_e = 1.0$. The red
    dots represent the maxima in the field energy which are used to
    compute a linear least-squares fit. The slope of the black line
    gives the damping rate. (Right) Damping rate as a function of
    normalized Debye length. The black dots show the numerical damping
    rates compared to the exact results (magenta) computed from the
    dispersion relation.}
  \label{fig:land-damp}
\end{figure}

Landau damping is the damping process of electrostatic plasma waves
in a collisionless plasma. In this test the ability of the algorithm
to capture the phenomena of Landau damping is shown. For this the
electrons are initialized with a perturbed Maxwellian given by
\begin{align}
    f(x,v,0) = \frac{1}{\sqrt{2\pi v_t^2}} \exp(-v^2/2v_t^2)
  [1+\delta\cos(kx)],
\end{align}
where $v_t$ is the thermal velocity, $k$ is the wave number, and
$\delta$ controls the perturbation. Periodic boundary conditions are
imposed in the spatial direction and zero-flux conditions in the
velocity direction. The ion density is set to $1$. The electrostatic field is determined from
\begin{align}
    \frac{\partial^2 \phi}{\partial x^2} = -\frac{\varrho_c}{\epsilon_0} = -\frac{1}{\epsilon_0}\left(1-\int f dv\right)  
\end{align}

For all
these tests $m=1$, $\epsilon_0=1$ and $\delta=0.01$. With these
settings, the plasma frequency is $\omega_{pe}=1$ and the Debye length
is $\lambda_D = \sqrt{T_e}$. The wave number is varied and the damping
rates are computed as the slope of the least-squares line passing
through successive maxima of the field energy. See
Fig.\,\ref{fig:land-damp} for details, which shows the field energy
for the case $T_e=1.0$.  Figure\,\ref{fig:land-damp} also shows the
numerical results compared to the exact values obtained from the
dispersion relation for Langmuir waves\footnote{For plasma oscillations the
  dispersion relation is
  \begin{align*}
    1 - \frac{1}{2 k^2 \lambda_D^2} Z'(\zeta) = 0
  \end{align*}
  where $\lambda_D$ is the Debye length and
  \begin{align*}
    Z(\zeta) = \frac{1}{\sqrt{\pi}}
    \int_{-\infty}^\infty
    \frac{e^{-x^2}}{x-\zeta}
    dx
  \end{align*}
}.

\begin{figure}
  \setkeys{Gin}{width=1.0\linewidth,keepaspectratio}
  \incfig{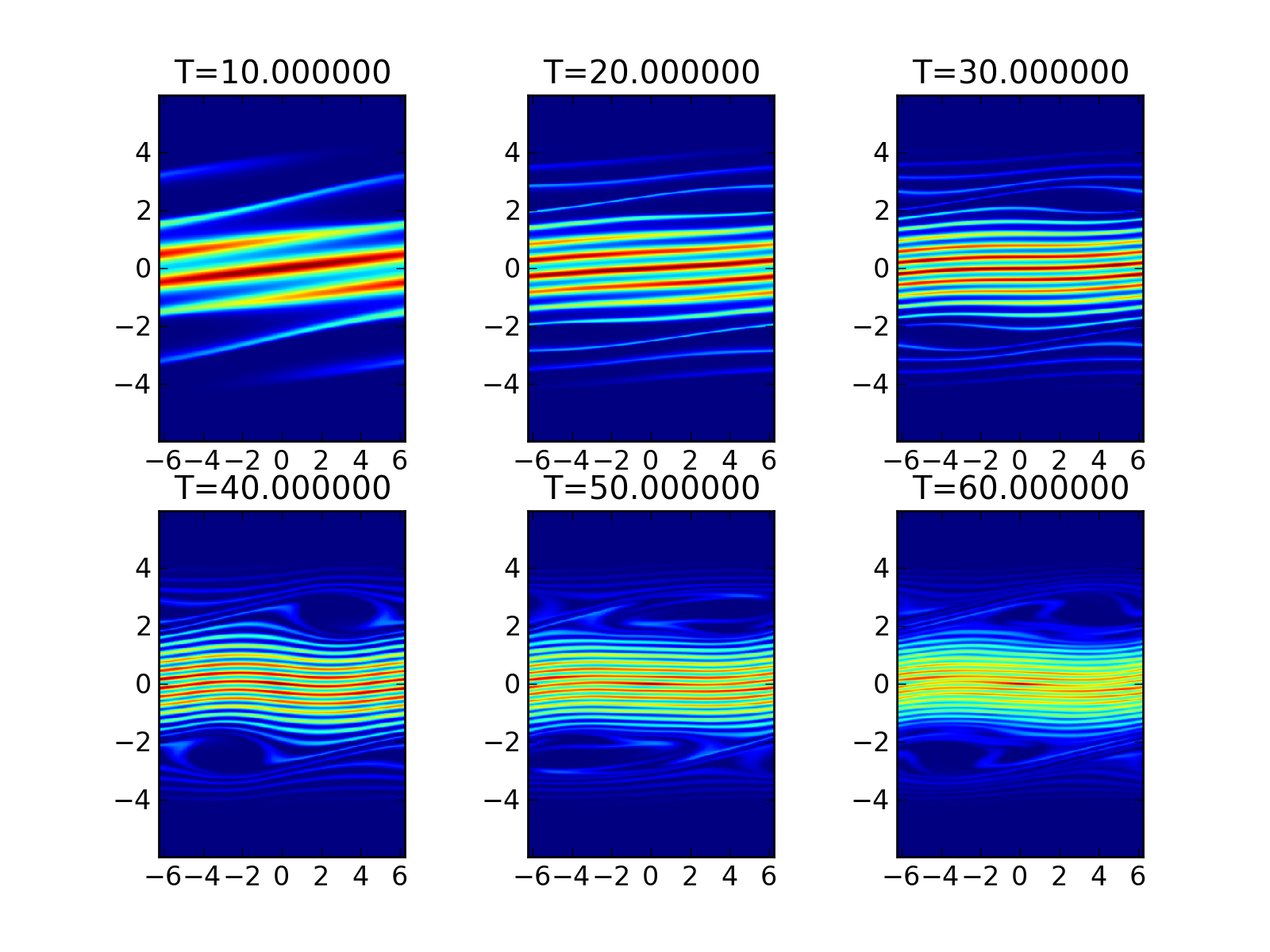}
  \caption{Distribution function at different times for the nonlinear
    Landau-damping problem. The initial perturbation undergoes
    shearing in phase space, leading to Landau damping from the phase
    mixing (see Fig.\,\ref{fig:land-damp} for the damping rate). Starting at around
    $t=20$ the damping is halted due to particle trapping, finally
    leading to saturation. Phase-space holes are clearly visible.}
  \label{fig:nonlin-land-damp}
\end{figure}

If we increase the perturbation to $\delta = 0.5$, the system is
rapidly driven nonlinear. Other parameters are kept the same as for
the linear Landau-damping problem with $k=0.5$ and $T_e=1.0$. In the
nonlinear phase, the Landau damping eventually halts due to the
formation of phase-space holes that lead to particle trapping that
``shuts off'' phase mixing (see Fig.\thinspace\ref{fig:nonlin-land-damp}).

Besides damping of electrostatic plasma waves, ion acoustic waves are also Landau damped. In this case we can assume that the electrons
are a massless isothermal fluid. Hence we
use, instead of the Poisson equation, the electron momentum equation
to determine the potential. That is
\begin{align}
  n_{i}(x) = n_{eo}\left(1 + \frac{|e|\phi}{T_e}\right)
\end{align}
where, in this linear probem, $n_{eo}$ is the constant electron
initial density and $T_e$ is the fixed electron temperature. This
allows the determination of the potential once the ion number density
is known. Note that this is an \emph{algebraic} equation for $\phi$
and not a PDE. However, as the number density that appears in this is
discontinuous (as the distribution function is discontinuous) this
expression would mean that the potential $\phi$ is also discontinuous,
violating the requirement for energy conservation. Hence, one realizes
that this equation can only hold \emph{weakly}, that is, one must
construct the potential such that it belongs to the continuous
subspace of the discontinuous basis: we must find $\phi_h$ such that
\begin{align}
  \int \psi \left[
  n_{i}(x) - n_{eo}\left(1 + \frac{|e|\phi_h}{T_e} \right)
  \right]\thinspace dx
  = 0
\end{align}
for all continuous test functions $\psi(x)$. This is a \emph{global}
equation for $\phi$ (analogous to the Poisson equation which is also
global) that results in a linear system needing inversion to compute
the discrete potential.

We have performed tests for Landau damping of ion-acoustic waves in this
approximation. The damping rates computed from our simulations (not
shown) match the exact rates very accurately. In fact, the plot of
damping with various values of $T_i/T_e$ looks identical to the right
panel of Fig.\thinspace\ref{fig:land-damp} as the dispersion relation
is exactly analogous to the case of Landau damping of electron
oscillations. For a nonlinear application in gyrokinetics
see\cite{Shi2015} in which we studied the problem of heat-flux on
divertor plates using a 1D gyrokinetic model, in which the field equation is local but the discrete potential needs to be determined using a non-local inversion along the magnetic field-line.

\subsection{Incompressible Euler flow}

In the final set of test problems we look at the incompressible Euler
equations instead of the Vlasov--Poisson equations. The algorithms
presented here are also applicable for this system as the
incompressible Euler equations in 2D can be written in Hamiltonian
form as described in the previous sections.

In the problem the simulation is initialized with two shear
layers. The initially planar shear layers are perturbed slightly due
to which they roll around each other, forming increasingly finer
vortex-like features. The initial conditions for this problem are
\begin{align}
  f(x,y,0) = 
  \left\{
    \begin{array}{l l}
      \delta\cos(x) - \mathrm{sech}^2[(y-\pi/2)/\rho]/\rho \quad y\le\pi \\
      \delta\cos(x) + \mathrm{sech}^2[(3\pi/2-y)/\rho]/\rho \quad y>\pi
    \end{array}
  \right..
\end{align}

\begin{figure}
  \setkeys{Gin}{width=1.0\linewidth,keepaspectratio}
  \incfig{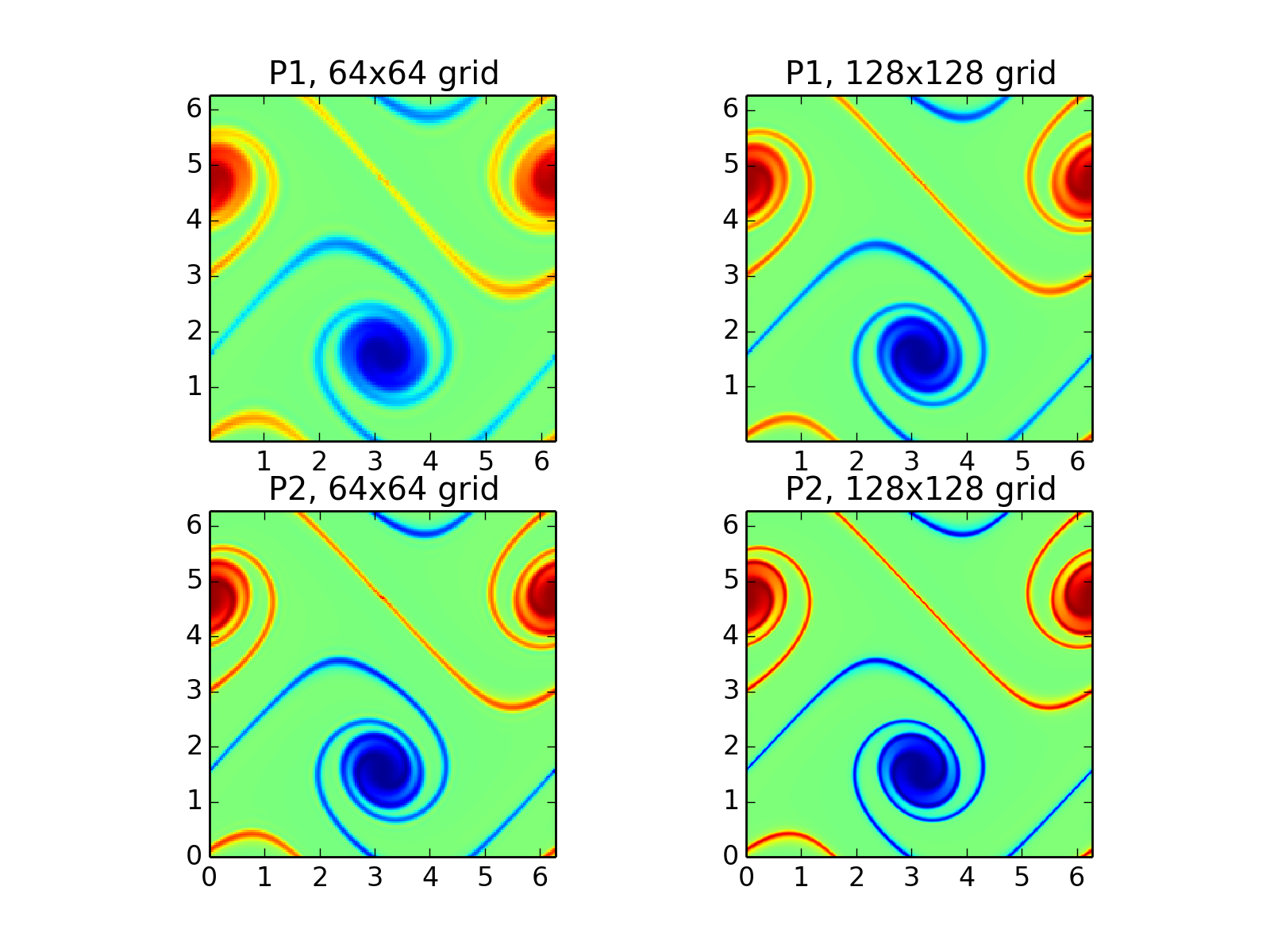}
  \caption{Double-shear-problem vorticity at t=8 with different grid
    resolutions and polynomial orders. Here, an upwind flux is used.}
  \label{fig:double-shear-upwind}
\end{figure}

For the results shown, $\rho = \pi/15$ and $\delta = 0.05$, and the problem is run to $t=8$. In the first set of simulations, an upwind flux was
used with different grid sizes and spatial-order schemes to compute
the solution. Figure\thinspace\ref{fig:double-shear-upwind} shows the
results at the final time from these simulations. Recall that
even though the energy is conserved with upwind fluxes, the enstrophy
is not. To conserve enstrophy one can use central fluxes.
With this the lack of energy and enstrophy conservation is solely due
to the damping
from the third-order Runge--Kutta time stepper used to advance the solution in
time. As seen in Fig.\thinspace\ref{fig:energy-enstrophy-hist}
these errors reduce as $O(\Delta t^3)$ as the time step is reduced.

\begin{figure}
  \incfig{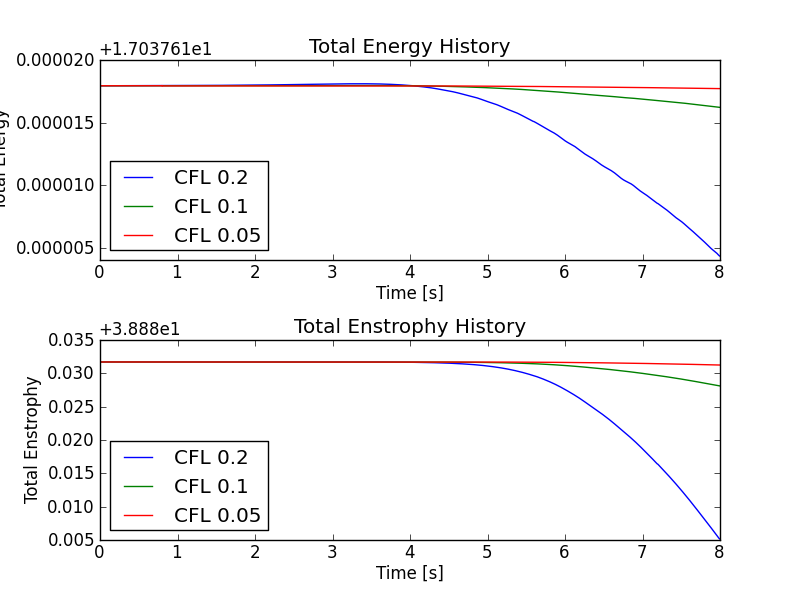}
  \caption{Total energy (top) and total enstrophy (bottom) history
    with different CFL numbers with central fluxes. Both energy and
    enstrophy errors asymptote to zero with the order of time-stepping
    scheme. The drop in energy is $1.36\times 10^{-5}$,
    $1.73\times 10^{-6}$ and $2.29\times 10^{-7}$ respectively, giving
    an order of $2.97$ and $2.91$. The drop in enstrophy
    is $2.66\times 10^{-2}$, $3.59\times 10^{-3}$ and
    $4.578\times 10^{-4}$ respectively, giving an order of $2.88$ and
    $2.97$.}
  \label{fig:energy-enstrophy-hist}  
\end{figure}

\begin{figure}
  \incfig{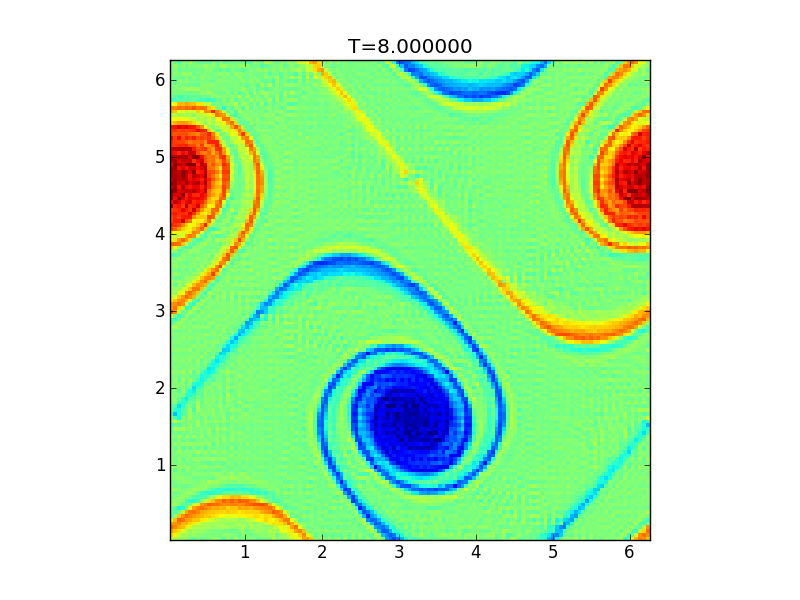}
  \caption{Vorticity at $t=8$ for double shear problem with central
    fluxes. Notice the significant phase errors in the solution as
    compared to the solution with the upwind fluxes (see Fig.\,\ref{fig:double-shear-upwind}). In general,
    a decaying enstrophy is desirable for stability.}
  \label{fig:double-shear-central}
\end{figure}

Figure\thinspace\ref{fig:double-shear-central} shows that
enstrophy conservation does not come for free: the lack of diffusion causes
significant phase errors when the flow structures reach the grid
scale. In general, a decaying enstrophy (or, more generally, a decaying $L_2$-norm of the solution) is desirable for stability.

\begin{figure}
  \setkeys{Gin}{width=1.0\linewidth,keepaspectratio}
  \incfig{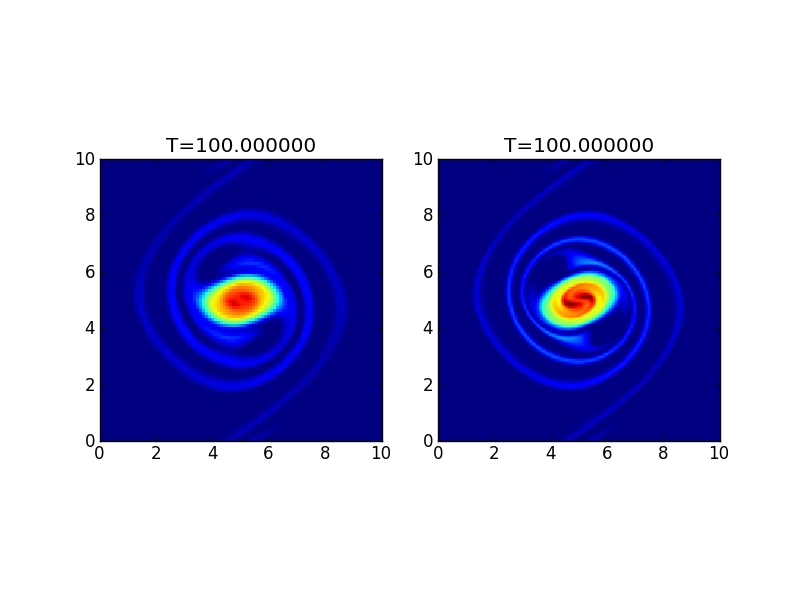}
  \caption{Vorticity for the vortex waltz problem with the
  second-order SP2 scheme. The left panel shows the solution with a $32 \times 32$
    grid, while the right panel shows the solution with a $64 \times 64$
    grid.}
  \label{fig:vortex-waltz}
\end{figure}

In the final problem the simulation was initialized with two Gaussian
vortices which merge as they orbit around each other. The vorticity is
initialized using the sum of two Gaussians given by
\begin{align}
  f(x,y,0) = \omega_1(x,y) + \omega_2(x,y)
\end{align}
where 
\begin{align}
  \omega_i(x,y) = e^{-r_i^2/0.8}
\end{align}
where $r_i^2 = (x-x_i)^2 + (y-y_i)^2$ and $(x_1,y_1) = (3.5,5.0)$ and
$(x_2,y_2) = (6.5,5.0)$. This is the ``vortex-waltz'' problem often
used to benchmark schemes for the incompressible Euler
equations. Figure\thinspace\ref{fig:vortex-waltz} shows the solutions
on $32\times 32$ and $64\times 64$ grids using the polynomial order 2
scheme with upwind fluxes. As for the double-shear problem, the energy
and (not shown) are conserved to the order of the time-stepping
scheme. Interestingly, the $p=2$ (SP2) scheme runs faster and gives
higher accuracy on a coarser mesh than the $p=1$ (SP1) scheme. In
general, this is a trend in all the simulations we have performed in
which $p=2$ seems to be a ``sweet spot'' for accuracy and speed.

\section{Conclusions}

We have presented novel discontinuous Galerkin (DG) and continuous
Galerkin (CG) schemes for the solution of a broad class of physical
systems described by Hamiltonian evolution equations. 
These schemes
conserve energy in the time-continuous limit, and the
$L_2$ norm (and hence the entropy) when using central-fluxes to update
the surface terms in the DG part of the scheme. 
They also conserve particles for arbitrary time step.
Our energy
conservation proof, rather surprisingly, is independent of upwinding
(which adds dissipation in the DG update) while computing the
distribution function. The physical reason for this is that updwinding
is applied to the phase-space advection that happens to be
perpendicular to the constant energy contours in phase-space. With
updwinding, however, the $L_2$ norm decays monotonically, leading to a
stable scheme with robust properties, providing some damping of barely-resolved grid-scale structures. We have
applied these schemes to a broad variety of test problems and provided
references to applications of our schemes to electrostatic and
electromagnetic gyrokinetic equations, a system of considerable
complexity that evolves in 5D phase-space. The scheme presented here
does not ensure that the distribution function is positive,
however. We have developed a novel scheme based on exponential
reconstruction to ensure positivity but without changing the
conservation properties of the scheme. This will be presented in a
future publication. Overall, the scheme presented here has proved to
be robust, accurate and, with novel choices of basis functions and
automated code generation \cite{HakimJuno2019}, very efficient. It
forms the basis of much of our research into plasma turbulence in
fusion machines \cite{mandell-2019,Shi2017thesis,Shi2017,Shi2019}.

\appendix

% SIAM wants Appendices before Acknowledgments and references
% (\include forces a  page break, while \input doesn't)
% For public version on arxiv:
\vspace{1em}
{\bf Appendix:  Getting \gke\ and reproducing the results.}
% For SIAM journal submission?:
% \section*{Getting \gke\ and reproducing the results}
%
To allow interested readers to reproduce our results and also use \gke\ for their applications, in this Appendix we provide instructions to get the code (in both binary and source format) as well as input files used here. Full installation instructions for \gke\ are provided on the \gke\ website (\url{http://gkeyll.readthedocs.io}). The code can be installed on Unix-like operating systems (including Mac OS and Windows using the Windows Subsystem for Linux) either by installing the pre-built binaries using the conda package manager (\url{https://www.anaconda.com}) or building the code via sources. The input files used here are under version control and can be obtained from the first author. Some are used as ``regression tests'' for the code. All tests can be run on a laptop.

\section*{Acknowledgments}

We are grateful for insights from conversations with Petr Cagas, Tess
Bernard, Jimmy Juno and other members of the \gke\ team. A. Hakim and G. Hammett are supported by 
the High-Fidelity Boundary Plasma Simulation SciDAC Project, part of
the DOE Scientific Discovery Through Advanced Computing (SciDAC)
program, through the U.S. Department of Energy contract
No. DE-AC02-09CH11466 for the Princeton Plasma Physics
Laboratory. A. Hakim is also supported by Air Force Office of
Scientific Research under Grant No. FA9550-15-1-0193.  N.~Mandell is
supported by the DOE CSGF program, provided under grant 
DE-FG02-97ER25308.

\bibliographystyle{siamplain}
\bibliography{gke}

\end{document}